\documentclass[conference]{IEEEtran}
\IEEEoverridecommandlockouts
\usepackage{amsmath,amssymb,amsfonts, amsthm}
\usepackage{enumitem}
\usepackage{subfigure}
\usepackage{graphicx}
\usepackage{algorithmic}
\usepackage{graphicx}
\usepackage{subcaption}
\usepackage{bm}
\usepackage{textcomp}
\usepackage{xcolor}
\usepackage{hyperref}

\usepackage{fancyhdr}

\def\BibTeX{{\rm B\kern-.05em{\sc i\kern-.025em b}\kern-.08em
    T\kern-.1667em\lower.7ex\hbox{E}\kern-.125emX}}

\newcommand{\ket}[1]{\lvert #1 \rangle}
\newcommand{\bra}[1]{\langle #1 \rvert }

\newcommand{\matr}[1]{\boldsymbol{#1}}
\newcommand{\vect}[1]{\mathbf{#1}}
\newcommand{\svect}[1]{\boldsymbol{#1}}

\newcommand{\density}[1]{\mathcal{S}_+^{#1(1)}}

\newcommand{\mathbbm}[1]{\text{\usefont{U}{bbm}{m}{n}#1}}

\def\eqref#1{equation~\ref{#1}}

\def\1{\boldsymbol{1}}

\DeclareMathAlphabet{\mathsfit}{\encodingdefault}{\sfdefault}{m}{sl}
\SetMathAlphabet{\mathsfit}{bold}{\encodingdefault}{\sfdefault}{bx}{n}

\usepackage[natbib=true, style=ieee, backend=biber, maxbibnames=9, maxcitenames=2, mincitenames=1, sorting=none]{biblatex}
\addbibresource{bibliography.bib}

\theoremstyle{plain}
\newtheorem{lemma}{Lemma}
\newtheorem{theorem}{Theorem}
\newtheorem{corollary}{Corollary}
\theoremstyle{definition}

\theoremstyle{remark}
\newtheorem{remark}{Remark}
\newtheorem{definition}{Definition}

\begin{document}

\title{Certifiably Robust Encoding Schemes\\
\thanks{The project/research is supported by the Bavarian Ministry of Economic Affairs, Regional Development and Energy with funds from the Hightech Agenda Bayern.}
}

\author{
    \IEEEauthorblockN{Aman Saxena\IEEEauthorrefmark{1}\IEEEauthorrefmark{3}, Tom Wollschläger\IEEEauthorrefmark{1}\IEEEauthorrefmark{3}, Nicola Franco\IEEEauthorrefmark{2}, Jeanette Miriam Lorenz\IEEEauthorrefmark{2}, Stephan G\"unnemann\IEEEauthorrefmark{3}}
    \IEEEauthorblockA{\IEEEauthorrefmark{3}School of Computation, Information \& Technology, Technical Univ. of Munich, Germany
    \\\{a.saxena, t.wollschlaeger, s.guennemann\}@tum.de}
    \IEEEauthorblockA{\IEEEauthorrefmark{2}Fraunhofer Institute for Cognitive Systems IKS, Munich, Germany
    \\\{nicola.franco, jeanette.miriam.lorenz\}@iks.fraunhofer.de}
    \IEEEauthorblockA{\IEEEauthorrefmark{1} denotes equal contribution}
}

\fancypagestyle{specialfooter}{%
  \fancyhf{}
  \renewcommand\headrulewidth{0pt}
  \fancyfoot[R]{ \noindent\fbox{%
    \parbox{\textwidth}{%
        {\footnotesize \copyright 2024 IEEE. Personal use of this material is permitted. Permission from IEEE must be obtained for all other uses, in any current or future media, including reprinting/republishing this material for advertising or promotional purposes, creating new collective works, for resale or redistribution to servers or lists, or reuse of any copyrighted component of this work in other works.}
        }
    }}
}

\maketitle
\thispagestyle{plain}
\pagestyle{plain}
\thispagestyle{specialfooter}

\begin{abstract}

Quantum machine learning uses principles from quantum mechanics to process data, offering potential advances in speed and performance. However,  previous work has shown that these models are susceptible to attacks that manipulate input data or exploit noise in quantum circuits. Following this, various studies have explored the robustness of these models. These works focus on the robustness certification of manipulations of the quantum states. 
We extend this line of research by investigating the robustness against perturbations in the classical data for a general class of data encoding schemes. We show that for such schemes, the addition of suitable noise channels is equivalent to evaluating the mean value of the noiseless classifier at the smoothed data, akin to Randomized Smoothing from classical machine learning.  
Using our general framework, we show that suitable additions of phase-damping noise channels improve empirical and provable robustness for the considered class of encoding schemes. 
\end{abstract}

\begin{IEEEkeywords}
Quantum Machine Learning, Certifiable Robustness, Randomized Smoothing
\end{IEEEkeywords}

\section{Introduction}

In recent years, the increasing success in quantum computing has propelled the interest in Quantum Machine Learning (QML) with the hope of advances in speed and performance. QML models typically encode the data into a quantum state and then use a variational quantum circuit (VQC) to transform those states before a measurement returns the prediction. However, research has shown that QML models are vulnerable to adversarial perturbations \cite{Quantum_ADV_Liu_2020, Quantum_ADV_Lu_2020}, which means that adversaries can attack the model to change its prediction. \citet{haar_random_samples, robust_in_practice} give theoretical arguments based on the concentration of measure exposing the vulnerability of QML models against adversarial noise. In the machine learning image domain, they add specifically crafted noise to an image, resulting in a change that is not visible to the human eye but can drastically alter the model's prediction. For QML, the adversary can manipulate either the classical data or the quantum states acting as input to the VQC. Moreover, the noise present in the quantum circuit during the data encoding process could also act as an adversary.

Several studies proposed ways to improve robustness or develop measurable guarantees to respond to the susceptibility of QML approaches \cite{QHT_robustness, gong2022enhancing, montalbano2024quantumadversariallearningkernel}. A notable research direction involves adding noise to derive differential privacy guarantees and extending them to provable robustness, \cite{quantum_noise_protects_2, quantum_noise_protects}. These methods model the perturbations within the space of the quantum state. However, while using QML on classical data, an adversary can create a strong attack utilizing the data encoding.

As a new approach, our work certifies models against perturbations in the classical data, enabling the certificate for the entire data processing pipeline. We introduce the incorporation of appropriate quantum noise channels and provide efficient certification of QML models against perturbations in classical data. Therefore, while aligned with related work, our approach stands out from prior research by addressing challenges posed by classical adversaries and offering efficient solutions for their certification. Particularly, this line of research is entirely different from using quantum computing to speed up the existing certification algorithm for classical machine learning models \cite{franco2022quantum, franco2023efficient, DRSOurs}.

For a general class of encoding schemes as proposed in \cite{schuld_effect_of_de}, we draw inspiration from the concept of randomized smoothing \cite{cohen2019}, a stochastic certification technique from classical machine learning. Hence, we smooth our data by adding noise and can thus derive guarantees for the smoothed version. We construct \textit{smooth} encoding schemes based on existing ones that efficiently certify the robustness of quantum classifiers. Specifically, for encoding schemes that use single-qubit rotation gates to encode the data, we show that adding suitable phase-damping or dephasing channels is equivalent to smoothing some classifiers as proposed by \citet{cohen2019}. We evaluate our approach on one artificial and two real-world datasets, \textit{MNIST} and \textit{TwoMoons}. We show that the amount of noise we use for smoothing drastically affects both the clean and the certified accuracy. We further analyze the certification strengths using an adversarial attack to create an empirical upper bound on the provable robustness.

\noindent Our main contributions are summarized as follows:
\begin{itemize}
   \item We develop a new randomized smoothing-inspired smooth encoding scheme that provides improved robustness and measurable guarantees.
    \item We show that our smooth encoding is equivalent to using suitable phase-damping channels for specific encoding schemes.
    \item We demonstrate the effectiveness of our method on multiple different datasets and analyze the strength of the certificate by providing an upper bound through adversarial attacks\footnote{Find our code at \href{https://www.cs.cit.tum.de/daml/robust-encoding-schemes}{cs.cit.tum.de/daml/robust-encoding-schemes}.}.
\end{itemize}

\label{sec:intro}
\section{Background}
\noindent\textbf{Notations and Definitions:} We define the following notations and briefly review some concepts that will be useful for our subsequent discussion:
\begin{itemize}
    \item $\eta(y):=  \mathbbm {1} \{y > \frac{1}{2}\}$.
    \item For $\vect x \in \mathcal X$ we define $\mathcal{B}_{\vect D, \epsilon}(\vect {x}) := \{\vect z \in \mathcal X : \vect D(\vect x, \vect z) \le \epsilon \}$ as the ball of radius $\epsilon$ with respect to the metric $\vect D$ defined in the data space $\mathcal X$.
    \item $\mathcal{S}_+^{d(1)} := \{\matr{\rho} \in \mathbb{C}^{2^d \times 2^d} | \rho \succeq 0 \land \text{Tr}(\matr{\rho}) = 1\}$; the space of Positive Semi-Definite (PSD) matrices of dimension $2^d \times 2^d$ with trace $1$.
    \item $\Phi(x):= \int_{-\infty}^{x} \frac{1}{\sqrt{2 \pi}} \exp(\frac{-z^2}{2}) dz$; the Cumulative Density Function (CDF) of the standard Gaussian distribution.x
    \item Unless specified otherwise, $\matr {\delta} \sim \phi$ implies that $\phi$ is the characteristic function of the distribution.
    \item POVM (Positive Operator Valued Measure): For the scope of this paper, POVM is a set of Hermitian, positive semi-definite operators $\{\matr {\Pi_i}\}_{i=1}^{N}$ such that $\sum_{i=1}^{N} \matr {\Pi_i} = \mathbb I$. POVMs are extremely useful for defining the probability measure on the outcome of the quantum measurement. Specifically, given the quantum state $\rho \in \mathcal{S}_+^{d(1)}$, the probability of observing the $i^{th}$ outcome is given as $p(i) = Tr(\matr {\Pi_i} \rho)$.
    \item CPTP (Completely Positive Trace Preserving) map defines the time evolution of open quantum systems, i.e. the systems that can be entangled with the environment. A map $\mathcal E: \mathcal{S}_+^{d(1)} \mapsto \mathcal{S}_+^{d(1)}$ is a CPTP map if and only if $\exists \{E_i\}_{i=1}^{M}$ such that $\sum_{i=1}^{M} E_i^H E_i = \mathbb I$ and $\mathcal E(\rho) = \sum_{i=1}^{M} E_i \rho E_i^H$. 
    \item Trace distance between two states $\matr{\rho_1}$ and $\matr{\rho_2}$ is the Shatten-1 norm of $\matr{\rho_1} - \matr{\rho_2}$. Let $\{\sigma_i(\matr A)\}_{i=1}^N$ be the singular values of $\matr A$, then $Tr(\matr{\rho_1}, \matr{\rho_2}) = \lvert \lvert \matr{\rho_1} - \matr{\rho_2} \rvert \rvert_*:= \sum\sigma_i(\matr{\rho_1} -  \matr{\rho_2})$.\\
\end{itemize}

\noindent\textbf{Quantum Classifiers:} In our work we focus on supervised quantum classifiers. The core idea of these models is to use a parameterized quantum circuit as a machine-learning model. The probability of predicting a certain class given a data point is associated with a certain outcome upon measuring an observable. The task of learning is to find the optimal parameters minimizing a given loss function. If we work with classical data we have a two-stage process. In the first part, we encode the classical data into the corresponding quantum states by applying a quantum circuit parameterized in the data. In the second stage, we try to find an optimal measurement that minimizes the desired loss function. These models are similar to kernel methods. Important properties of such models have been extensively discussed in related work \cite{schuld_effect_of_de, schuld_kernel_methods, expo_growing}. We keep our discussion focused on binary classifiers which can easily be extended to multi-class classifiers. We formally define the quantum binary classifiers for classical data as follows:

\begin{definition}[Quantum Binary Classifier]
    Let $\vect x \in \mathbb R^n$ be our data and $\rho_{\theta_1}: \mathbb R^n \mapsto \mathcal{S}_+^{d(1)}$ be the parameterized data encoding circuit. In addition, $\mathcal{E}_{\theta_2}(.)$ is a parameterized CPTP map and $\{\matr {\Pi},  \mathbb I - \matr {\Pi}\}$ a POVM, defining the measurement process. The classifier $y: \mathbb R^n \mapsto [0,1]$ then is:   
    \begin{align*}
        y(\vect x; \theta_1, \theta_2) := Tr(\matr{\Pi} \mathcal{E}_{\theta_2}(\matr{\rho_{\theta_1}(\vect x)}))
    \end{align*}
    Further $f(\vect x; \theta_1, \theta_2) := \eta(y(\vect x; \theta_1, \theta_2)) \in \{0,1\}$ returns the class prediction. \\
    \label{def:qnn}
\end{definition}

\noindent\textbf{General near-term Encoding Schemes:} \citet{schuld_effect_of_de} propose a general near-term encoding scheme and show that they represent partial Fourier series in the features. The frequency spectrum that the underlying models represent depends on the number of data encoding gates. In addition, the authors prove the universal approximation theorem for the increasingly extensive frequency spectrum and argue that they can represent various encoding schemes from the literature with their framework upon classically transforming the features. In particular, the classically intractable encoding scheme of \citet{quantum_enhanced_fs} can be represented using the framework of \citet{schuld_effect_of_de} using feature transformations of the form $\phi_{ij}(\vect x) = x_i x_j$.
\begin{figure}[t]
    \centering
    \includegraphics[width=1\linewidth]{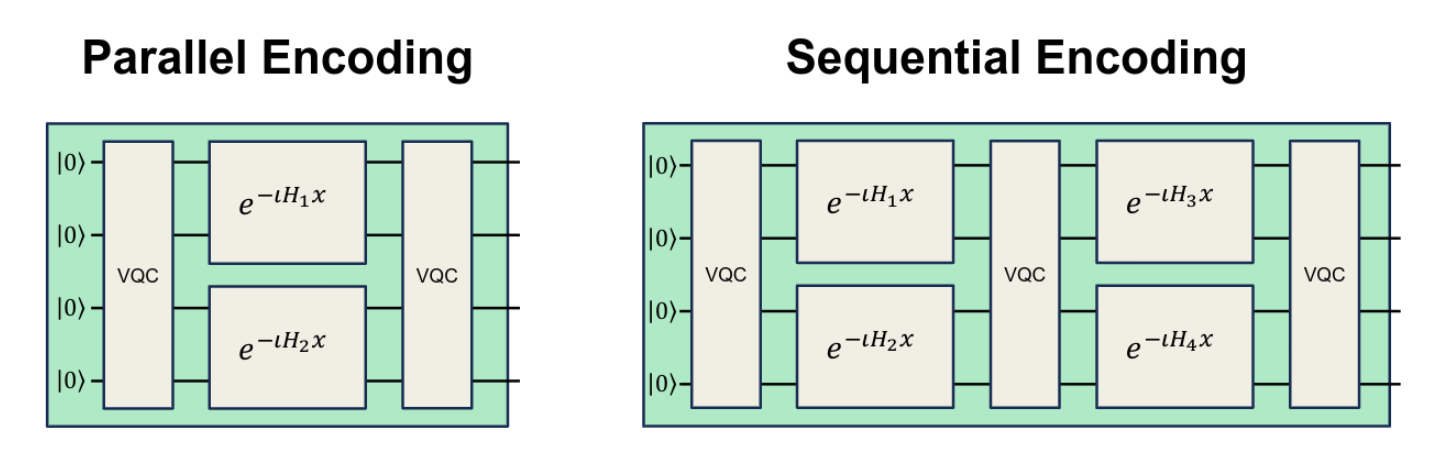}
    \caption{Parallel Encoding maps data to parallel quantum subsystems while sequential encoding is just a stack of parallel layers.} 
    \label{2:fig:qml:parallelvsseq}
\end{figure}
Typically these schemes encode data with the eigenvalues of the Hamiltonian governing the system. 

We categorize the QML models discussed here into two categories depending on how the data is encoded: \textbf{parallel} or \textbf{sequential}.

\subsubsection{Parallel Encoding Schemes}\label{2:sec:qml:parallel}

In the case of parallel schemes, data is encoded in parallel subsystems using their respective Hamiltonians. 
Generally, these Hamiltonians are assumed to be diagonal\footnote{Otherwise use the spectral decomposition to write them as $V \Lambda(x) V^H$ and assume $V, V^H$ to be a part of the immediate VQC gates.}. 
We denote them by $\Lambda(x):= \text{diag}(\exp(-\iota \lambda_i \vect x))$. The density matrix after the encoding block can then be written as $\rho(\vect x) = \Lambda(\vect x)\ket{\gamma} \bra{\gamma} \Lambda(\vect x)^H$. Thus, the individual component is given by:
\begin{equation}
\label{2:eq:qml:parallel}
        \matr \rho(\vect x)_{ij} = \gamma_i  \gamma_j^* \exp(-\iota(\lambda_i - \lambda_j) \vect x).    
    \end{equation}  

\subsubsection{Sequential Encoding Schemes}
\label{2:sec:qml:sequential}
Another way to extend the set of frequencies is by repeatedly inserting the Hamiltonian encoding layers sequentially. A sequential layer is essentially a repetition of parallel encoding blocks as follows: 

\begin{align}
\label{2:eq:qml:sequential}
    \matr \rho(\vect x) := &U_L(\vect x)W^{(L)}...U_1(\vect x)W^{(1)}\rho_0 W^{(1)H}U_1(\vect x)^H... \\
    &W^{(L)H}U_L(\vect x)^H. 
\end{align}
where $U_l(x) := diag([\exp(-\iota \lambda^{(l)}_i x_i)])$ represents the diagonal data encoding block.\\

\noindent\textbf{Adversarial Robustness:}
Given an instance of data $\vect {x_0}$, adversarially attacking the classifier $f(\cdot)$ at $\vect{x_0}$ typically involves finding a \textit{semantically} similar data point $\vect x$ such that $f(\vect x) \neq f(\vect {x_0})$. Typically, two data points are said to be semantically similar if they are close enough with respect to some metric defined on the input space. We define this metric-based notion of robustness.
 \begin{definition}[$(\vect D, \epsilon)$-Robust]
    Let $f : \mathcal{X} \mapsto \{0,1\}$ be the classifier on the data space $\mathcal{X}$, then $f$ is said to be $(\vect D, \epsilon)$-Robust at $\vect{x_0}$ if the following holds:
    \begin{align*}
        f(\vect{x}) = f(\vect{x_0}) \quad \forall \vect{x} \in \mathcal{B}_{\vect D, \epsilon}(\vect {x_0}).
    \end{align*}
    \label{def:adv_rob}
\end{definition}
In the continuous classical machine learning setting, the data space $\mathcal{X}$ could be some Euclidean space $\mathbb R^n$, and $\vect D$ induced from $l_p$-norm $\lvert\lvert \cdot \rvert\rvert_p$.
Providing robustness guarantees typically involves addressing complex optimization problems that can be non-linear, non-convex, and highly dependent on the model's structure. Given a quantum classifier $f(\vect x) := \eta(Tr(\matr{\Pi} \mathcal{E}(\matr{\rho(\vect x)})))$, the certification problem at $\vect {x_0} \in \mathbb R^n$ can be reduced to finding the worst-case margin as follows:
\begin{equation}
    \label{eq:4:opt_class}
    \begin{aligned}
    \vect {m^*} := &\min_{\vect x \in \mathbb R ^n} (-1)^{f( \vect{x_0})} (1 - 2Tr(\matr{\Pi} \mathcal{E}(\matr{\rho(\vect x)})))\\
    &\text{subject to} \quad \lvert\lvert \vect x - \vect {x_0} \rvert \rvert_p \leq \epsilon .
    \end{aligned}
\end{equation}

This margin encodes whether or not the adversary can fool the model for any of the admissible perturbed data points. If we can find the lower bound $\vect {\underline{m^*}}$ of $\vect {m^*}$ such that $\vect {\underline{m^*}} > 0$, then the classifier $f$ is certifiably $(\lvert \lvert \cdot \rvert \rvert_p, \epsilon)$-Robust at $\vect{x_0}$. One strategy involves identifying and optimizing a convex lower bound for the objective function \cite{wong2018provable}. Alternatively, randomized smoothing can be used to stochastically obtain a monotonic lower bound for the smooth classifier, thus simplifying the optimization challenge \citet{Zhang2020BlackBoxCW, cohen2019}.

\label{sec:background}

\section{Related Works}
The existing literature predominantly addresses the robustness certification of quantum classifiers against perturbations in quantum states. Given an input quantum state $\rho \in \density(d)$, related work establishes $(D, \epsilon)$-Robustness guarantees for the underlying classifier. Typically, the metric $D$ is selected as the trace distance on density matrices or any other metric induced by fidelity distance. Drawing inspiration from the classical RS framework \cite{cohen2019}, recent works \cite{quantum_noise_protects, quantum_noise_protects_2} propose to use noise to the classifier's operations enabling the derivation of robustness guarantees against perturbations measured in the trace distance. Specifically, \citet{quantum_noise_protects} use the addition of a depolarization noise channel, thereby probabilistically transforming the input state to the maximally mixed state with a probability $p$. They derive robustness guarantees by first establishing the differential privacy of the noisy classifier. However, as demonstrated in \citet{QHT_robustness}, these guarantees prove to be weaker even in comparison to the trivial certification obtained using Hölder's duality.
Moreover, their practical application is limited when dealing with higher-dimensional datasets~\cite{winderl2023quantum}.
Building on this line of research, \citet{winderl2024constructing} expanded the scope of $(D, \epsilon)$-Robustness guarantees, previously observed for depolarizing and random rotation channels, to arbitrary noise channels.
In contrast, \citet{QHT_robustness} addresses the certification problem through QHT and achieves guarantees that surpass trivial bounds. 

Despite certifying perturbations in the input quantum state, one does not receive any guarantees for perturbations in the underlying input classical data point. For example, to certify at $\vect{x_0}$, we still need to solve the following optimization problem:

\begin{equation}
    \label{eq:4:opt_class_to_quant}
    \begin{aligned}
    \vect {d^*} := &\max_{\vect{x} \in \mathbb R^n} \vect D(\matr \rho(\vect x), \matr \rho(\vect {x_0}))\\
    &\text{subject to} \lvert\lvert \vect{x} - \vect{x_0} \rvert\rvert \leq \epsilon. 
    \end{aligned}
\end{equation}

Here, $\vect D$ represents the desired metric on the quantum states. To certify perturbations in the classical data space, one must identify the non-trivial upper bound of $\vect{d^*}$ and verify whether it is less than the maximum allowed perturbation in the quantum space derived from previous work.

In our work, we focus on deriving such guarantees for general near-term encoding schemes \cite{schuld_effect_of_de}. We use the randomized smoothing technique to find the convex upper bound of the objective function in \autoref{eq:4:opt_class_to_quant}. Furthermore, we show that such smoothing can be realized using suitable noise channels. Essentially, we derive noise channels that make our encoding scheme and hence the whole quantum classifier efficiently certifiable.   

\label{sec:relatedwork}

\section{Methods}
\subsection{Randomized Smoothing as Quantum Noise}
     
Randomized smoothing (RS), as introduced in \cite{cohen2019}, constructs a smooth classifier from a base classifier $f$. This is achieved by adding Gaussian noise to the input and then calculating the classifier's mean prediction for different noisy samples. Specifically, the smooth classifier $g(\vect x)$ is evaluated as $\mathbb E_{\matr \delta \sim \mathcal N(0, \sigma^2)} [f(\vect x + \matr \delta)]$. In our work, we draw inspiration from RS and change an existing encoding scheme $\matr \rho(\vect x)$ to a smoothed version $\matr{\tilde \rho} (\vect x)$. Instead of directly encoding $\vect x \in \mathbb R^n$, we sample noise $\matr \delta $ from a zero-centered distribution $\phi$ with finite variance and encode $\vect x + \matr \delta$ instead. We then take the expected value of the random variable matrix $\matr \rho(\vect x + \matr \delta)$ to obtain the smooth encoding of the data point $\vect x$, i.e., $\matr{\tilde \rho} (\vect x) := \mathbb E_{\matr \delta \sim \phi}[\matr \rho(\vect x+ \matr \delta)] \in \density{d}$. We derive a black-box upper bound to the bound that is independent of the base encoding scheme, see Equation \ref{eq:4:opt_class_to_quant}.

We first consider a simple case of parallel encoding schemes, for which we show that the smoothing operation with respect to any given distribution $\phi$ can be realized as a data-independent quantum channel\footnote{data-dependent channels if nonlinear feature transformations are used}. All results on smooth parallel encoding schemes are formalized in Theorem \ref{Th:Parallel_Encoding}, \ref{Th:Paralle_Upper_Bound}. We generalize these results to any Hamiltonian-based encoding scheme.%
The idea is to apply appropriate quantum noise to each parallel layer. The smooth sequential encoding scheme is formally defined in Definition \ref{def:smooth_sequential_encoding} and the formal guarantees are derived in Theorem \ref{Th:Sequential_Encoding}.
Further, we show that smoothing the single-qubit Hamiltonian encoding is equivalent to applying appropriate phase-damping (dephasing) channels. Any parallel Hamiltonian encoding that encodes data into single-qubit subsystems is the same as a stack of single-qubit parallel encodings. Therefore, we can use our smooth sequential encoding formalism to derive guarantees after applying appropriate phase-damping channels.  

We present results tailored for 1D data, specifically when $\mathcal{X} := \mathbb{R}$. These insights can be readily extended to multiple dimensions, as smooth sequential encoding inherently accomplishes. We defer all proofs are deferred to the appendix.\\

\noindent \textbf{Smooth Parallel Encoding Schemes:}
Assume, $\matr \delta \sim \phi$, using the representation of the parallel encoding schemes from Equation \ref{2:eq:qml:parallel}, we can express the smooth version as:
\begin{align*}
 \matr{\tilde \rho}(\vect x)_{ij}: &= \mathbb E_{\matr \delta \sim \phi}[\matr \rho(\vect x+ \matr \delta)_{ij}] \\ 
 &=\mathbb E_{\matr \delta} [\gamma_i \gamma_j^*\exp(-\iota (\lambda_i - \lambda_j)(\vect x + \matr \delta))] \\
 &= \gamma_i \gamma_j^*\exp(-\iota (\lambda_i - \lambda_j)\vect x) \phi(\lambda_j - \lambda_i)\\
 &= (\matr \rho \odot A_{\phi, \lambda})_{ij},
\end{align*}
where $A_{\phi, \lambda} := [\phi(\lambda_j - \lambda_i)]_{ij}$\footnote{e.g., for Gaussian distribution $A_{\phi, \lambda} := [\exp(\frac{\sigma^2(\lambda_j - \lambda_i)^2}{2})]_{ij}.$} and $\odot$ denotes point-wise multiplication. The transformation $\rho \mapsto \rho \odot A_{\phi, \lambda}$ is linear on the space of density matrices. In the following theorem, we show that this smoothing transformation is a valid quantum operation and derive its Kraus representation. 
\begin{theorem}[Smooth Parallel Encoding: Smoothing is a CPTP map]
    Let $\matr \delta \sim \phi$ such that $\mathbb E_{\matr \delta}[\matr \delta] = 0$ and $E_{\matr \delta}[\matr \delta^2] < \infty$. Define $\matr{\rho}(\vect x) := \matr U(\vect x)\ket{\gamma}\bra{\gamma}\matr U(\vect x)^H$ where $\matr U(\vect x) := diag([\exp(-\iota \lambda_i \vect x]_i)$, $\matr A_{\phi, \lambda} := [\phi(\lambda_j - \lambda_i)]_{ij}$ and $\mathcal E_\phi:  \matr \rho \mapsto \matr \rho \odot \matr{A}_{\phi, \lambda}$, the following holds: 
    \begin{enumerate}[label=(\alph*)]
        \item $\mathcal E_\phi$ is a CPTP map and $\mathbb E_\delta[\matr{\rho}(\vect x+ \matr \delta)] = \mathcal E_\phi (\matr{ \rho}(\vect x))$.

        \item $\matr A_{\phi, \lambda} \succeq 0$, let $\matr A_{\phi, \lambda} = \sum_k \sigma_k \vect u_k \vect u_k ^ H$ be its spectral decomposition and define $\matr {E_k} := \sqrt{\sigma_k} diag(u_k)$, then $\{\matr {E_k}\}_k$ represents the Kraus representation of $\mathcal E_\phi$, i.e.  $\mathcal E _\phi(\matr \rho) = \sum_k \matr {E_k} \matr \rho \matr {E_k} ^H $ and $\sum_k \matr {E_k}^H \matr {E_k} = \mathbb I $.        \label{subth:trace_distance}
    \end{enumerate}
    \label{Th:Parallel_Encoding}
\end{theorem}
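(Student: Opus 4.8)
The plan is to verify the two claims in sequence, using the explicit formula $\matr{\tilde\rho}(\vect x) = \matr\rho(\vect x)\odot\matr A_{\phi,\lambda}$ already derived in the text, and then identifying the Hadamard-product map with a Kraus channel. For part (a), the identity $\E_\delta[\matr\rho(\vect x+\matr\delta)] = \mathcal E_\phi(\matr\rho(\vect x))$ is exactly the computation preceding the theorem statement — it follows from linearity of expectation applied entrywise together with the definition of the characteristic function, $\E_{\matr\delta\sim\phi}[\exp(-\iota(\lambda_i-\lambda_j)\matr\delta)] = \phi(\lambda_j-\lambda_i)$. So the real content of (a) is that $\mathcal E_\phi$ is CPTP. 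I would establish this by first proving the statement of part (b) and then observing that any map admitting a representation $\matr\rho\mapsto\sum_k \matr E_k\matr\rho\matr E_k^H$ with $\sum_k \matr E_k^H\matr E_k = \mathbb I$ is automatically completely positive (sum of conjugations) and trace preserving (by the normalization and cyclicity of the trace). Hence (a) reduces to (b).

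For part (b), the first step is to show $\matr A_{\phi,\lambda}\succeq 0$. This is the classical fact that the matrix $[\phi(\lambda_j-\lambda_i)]_{ij}$ — a Gram-type matrix built from the characteristic function evaluated at differences of the "frequency" points $\lambda_i$ — is positive semidefinite by Bochner's theorem: for any vector $\vect c$, $\sum_{ij} \bar c_i c_j \phi(\lambda_j-\lambda_i) = \E_{\matr\delta}\big[\big|\sum_i c_i \exp(\iota\lambda_i\matr\delta)\big|^2\big] \ge 0$. (One must be slightly careful with the complex-conjugation convention so that $\matr A_{\phi,\lambda}$ comes out Hermitian PSD rather than its transpose; since $\phi(-t)=\overline{\phi(t)}$ for a real random variable, $\matr A_{\phi,\lambda}$ is Hermitian and the above gives PSD-ness.) Given the spectral decomposition $\matr A_{\phi,\lambda} = \sum_k \sigma_k \vect u_k\vect u_k^H$ with $\sigma_k\ge 0$, set $\matr E_k := \sqrt{\sigma_k}\,\mathrm{diag}(\vect u_k)$. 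The key algebraic step is the identity, for any matrix $\matr\rho$,
\begin{align*}
\Big(\sum_k \matr E_k\matr\rho\matr E_k^H\Big)_{ij}
&= \sum_k \sigma_k (u_k)_i \overline{(u_k)_j}\,\rho_{ij}
= \Big(\sum_k \sigma_k \vect u_k\vect u_k^H\Big)_{ij}\rho_{ij}
= (\matr A_{\phi,\lambda})_{ij}\,\rho_{ij},
\end{align*}
which is exactly $(\matr\rho\odot\matr A_{\phi,\lambda})_{ij}$, establishing $\mathcal E_\phi(\matr\rho)=\sum_k\matr E_k\matr\rho\matr E_k^H$. Finally, the completeness relation follows from the diagonal of $\matr A_{\phi,\lambda}$: $\big(\sum_k\matr E_k^H\matr E_k\big)_{ii} = \sum_k\sigma_k|(u_k)_i|^2 = (\matr A_{\phi,\lambda})_{ii} = \phi(0) = 1$, and the off-diagonal entries vanish because each $\matr E_k$ is diagonal, so $\matr E_k^H\matr E_k$ is diagonal; hence $\sum_k\matr E_k^H\matr E_k = \mathbb I$.

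I expect the main obstacle to be purely bookkeeping rather than conceptual: getting the conjugation conventions consistent so that $\matr A_{\phi,\lambda}$ is genuinely Hermitian PSD (and not, say, its complex conjugate), and confirming that the diagonal-Kraus structure is compatible with the $\ket{\gamma}\bra{\gamma}$-based definition of $\matr\rho(\vect x)$ (it is, since the Hadamard product is defined purely entrywise and does not interact with the rank-one structure). A secondary subtlety worth a sentence is that the hypotheses $\E[\matr\delta]=0$ and $\E[\matr\delta^2]<\infty$ are not actually needed for (a)–(b) as stated — they are only used later (for the upper-bound theorem), and $\phi(0)=1$ plus $|\phi(t)|\le 1$ are all that the channel property requires — but I would simply not dwell on this and prove the statement as written.
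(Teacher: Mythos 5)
Your proposal is correct and follows essentially the same route as the paper: positive semidefiniteness of $\matr A_{\phi,\lambda}$ via the Gram/Bochner argument (the paper's Lemma~\ref{lemma:phikernel}), the entrywise verification that the diagonal Kraus operators $\matr E_k=\sqrt{\sigma_k}\,\mathrm{diag}(\vect u_k)$ reproduce the Hadamard product and satisfy $\sum_k\matr E_k^H\matr E_k=\mathbb I\odot\matr A_{\phi,\lambda}=\mathbb I$, and the reduction of (a) to (b). The paper additionally sketches a second, independent proof of complete positivity using the representation $\mathcal E_\phi(\matr\rho)=\mathbb E_{z\sim\phi}[\matr U(z)\matr\rho\matr U(z)^H]$ and a Schmidt-decomposition argument, but this is redundant with the Kraus-based route you (and the paper) take as the main line.
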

Application of the smooth quantum operation $\mathcal E_\phi$ defined above, allows us to derive a black box and non-trivial upper bound of the trace distance between the smooth encodings of the two data points. We state this result as Theorem \ref{Th:Paralle_Upper_Bound}. Obtaining such bounds can make the optimization problem in Equation \ref{eq:4:opt_class_to_quant} tractable, thus facilitating the efficient certification of these QML models. This is evident in 
 Corollary~\ref{cor:Parallel_Gaussian} showcasing the results for Gaussian smoothing where we get an upper bound of the trace distance that is monotonic with respect to the $l_2$ norm of the input data points. 

\begin{theorem}[Smooth Parallel Encoding: Upper bound of the Trace distance]
Following the notation of Theorem \ref{Th:Parallel_Encoding} but we additionally assume that the distribution $\phi$ has a probability distribution function $f_\phi$ and denote $d(\cdot,\cdot)$ as the trace distance between the density matrices. Then the distance is bounded by: 
    \begin{multline}
        \label{eq:above_bound_1}
        d(\mathcal E_\phi(\matr \rho(\vect x)), \mathcal E_\phi(\matr \rho(\vect y))) \leq \\
        \int_{\mathbb R} \max(f_\phi(\vect z- \vect x) - f_\phi(\vect z- \vect y), 0) d\lambda(\vect z).
    \end{multline}
    \label{Th:Paralle_Upper_Bound}
\end{theorem}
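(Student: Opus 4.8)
Proof plan for Theorem \ref{Th:Paralle_Upper_Bound}.

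\textbf{Overall strategy.} The goal is to bound the trace distance $d(\mathcal E_\phi(\matr \rho(\vect x)), \mathcal E_\phi(\matr \rho(\vect y)))$. By Theorem \ref{Th:Parallel_Encoding}(a), applying $\mathcal E_\phi$ to $\matr \rho(\vect x)$ yields $\mathbb E_{\matr \delta \sim \phi}[\matr \rho(\vect x + \matr \delta)]$, and since $\phi$ has a density $f_\phi$, we can write $\mathcal E_\phi(\matr \rho(\vect x)) = \int_{\mathbb R} f_\phi(\vect z - \vect x)\, \matr \rho(\vect z)\, d\lambda(\vect z)$ after the change of variables $\vect z = \vect x + \matr \delta$. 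The plan is to exploit this mixture representation together with two facts: (i) the trace norm is convex and obeys the triangle inequality, so $d$ of two mixtures is controlled by the ``mass that has to be moved'' between the two mixing measures; and (ii) each $\matr \rho(\vect z) = \matr U(\vect z)\ket\gamma\bra\gamma \matr U(\vect z)^H$ is a unit-trace PSD (in fact pure) state, hence $\|\matr \rho(\vect z)\|_* = 1$.

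\textbf{Key steps, in order.} First, I would write the difference as
\begin{align*}
\mathcal E_\phi(\matr \rho(\vect x)) - \mathcal E_\phi(\matr \rho(\vect y)) = \int_{\mathbb R} \bigl(f_\phi(\vect z - \vect x) - f_\phi(\vect z - \vect y)\bigr)\, \matr \rho(\vect z)\, d\lambda(\vect z).
\end{align*}
Second, split the scalar integrand into positive and negative parts, $h(\vect z) := f_\phi(\vect z - \vect x) - f_\phi(\vect z - \vect y) = h^+(\vect z) - h^-(\vect z)$, and note that since both translated densities integrate to $1$, we have $\int h^+ = \int h^- =: p$. Thus the difference of states equals $p\bigl(\mu^+ - \mu^-\bigr)$, where $\mu^{\pm} := \tfrac{1}{p}\int h^{\pm}(\vect z)\matr\rho(\vect z)\,d\lambda(\vect z)$ are themselves genuine density matrices (PSD, unit trace, being convex mixtures of the pure states $\matr\rho(\vect z)$ with a normalized weight). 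Third, apply the triangle inequality for the trace norm: $\|\mathcal E_\phi(\matr\rho(\vect x)) - \mathcal E_\phi(\matr\rho(\vect y))\|_* = p\,\|\mu^+ - \mu^-\|_* \le p(\|\mu^+\|_* + \|\mu^-\|_*) = 2p$. Finally, unwind $p = \int h^+(\vect z)\,d\lambda(\vect z) = \int \max(f_\phi(\vect z - \vect x) - f_\phi(\vect z - \vect y), 0)\,d\lambda(\vect z)$; combined with the convention that the trace distance in the statement is $\tfrac12\|\cdot\|_*$ (or absorbing the factor accordingly), this gives exactly the claimed bound.

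\textbf{Anticipated obstacle.} The routine calculus is harmless; the delicate points are (a) justifying the interchange of expectation and the matrix-valued integral, i.e. that $\mathcal E_\phi(\matr\rho(\vect x)) = \int f_\phi(\vect z - \vect x)\matr\rho(\vect z)\,d\lambda(\vect z)$ really holds entrywise and in trace norm — this follows from Theorem \ref{Th:Parallel_Encoding}(a) together with boundedness of the entries of $\matr\rho(\vect z)$ and the finite-variance (hence integrability) hypotheses on $\phi$, so dominated convergence / Fubini applies; and (b) confirming that $\mu^{\pm}$ are valid states, which needs $p > 0$ (the degenerate case $\vect x = \vect y$ giving distance $0$ is trivial) and the observation that a mixture $\int w(\vect z)\matr\rho(\vect z)\,d\lambda(\vect z)$ of PSD trace-one matrices against a probability density $w$ is again PSD with trace one — the latter by linearity of the trace and positivity being preserved under integration of a PSD-valued function. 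I would also double-check the normalization convention relating $d(\cdot,\cdot)$ to the Schatten-$1$ norm used elsewhere in the paper so the constant in \eqref{eq:above_bound_1} matches; depending on that convention the factor is either $1$ or $\tfrac12$, and the statement as written corresponds to $d = \tfrac12\|\cdot\|_*$ being bounded by $p$, i.e. the $2p$ above divided by $2$.
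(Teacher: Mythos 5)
Your proof is correct, but it takes a genuinely different route from the paper's. The paper bounds the trace distance via its variational (dual) characterization, writing $d(\matr\rho_1,\matr\rho_2)$ as $\max_{0\preceq \matr P\preceq \mathbb I}\Tr(\matr P(\matr\rho_1-\matr\rho_2))$, maximizing over all encoding maps $\sigma\in\mathcal M_d$, and then relaxing the family of achievable functions $\vect z\mapsto \Tr(\matr P\sigma(\vect z))$ to all measurable $[0,1]$-valued functions, whose maximizer is the indicator of $\{f_\phi(\vect z-\vect x)\ge f_\phi(\vect z-\vect y)\}$. You instead argue on the primal side: Jordan-decompose the signed mixing density $h=h^+-h^-$, observe that the difference of the two smoothed states is $p(\mu^+-\mu^-)$ for genuine states $\mu^\pm$, and apply the triangle inequality for the Schatten-$1$ norm. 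Both arguments are black-box in the base encoding (yours uses only that each $\matr\rho(\vect z)$ is a unit-trace PSD matrix; the paper's maximizes over all such encodings), and both land on the same total-variation-type bound; yours is more elementary, while the paper's dual formulation makes transparent that the bound is the worst case over all POVM elements and encodings, which is the form reused in the proof of Theorem \ref{Th:Sequential_Encoding}. Your caution about the normalization is well placed: the paper's Background section defines the trace distance as the full Schatten-$1$ norm without a factor $\tfrac12$, yet its proof and the Gaussian corollary (bound $2\Phi(\lvert\vect x-\vect y\rvert/2\sigma)-1$) implicitly use the $\tfrac12\lVert\cdot\rVert_*$ convention, which is exactly the one under which your final step $\tfrac12\lVert\Delta\rVert_*\le p$ reproduces \eqref{eq:above_bound_1}.
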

We apply Theorem \ref{Th:Paralle_Upper_Bound} for the case of Gaussian smoothing and express the results as Corollary \ref{cor:Parallel_Gaussian}. The first part gives an upper bound for the trace distance while the second part establishes robustness guarantees.  
    \begin{corollary}[Gaussian parallel smooth encoding scheme: Infinite sampling] Following the setup in Theorem \ref{Th:Parallel_Encoding}
    and using a Gaussian distribution with variance $\sigma^2$ as a smoothing distribution, i.e. $\matr \delta \sim \mathcal N(0, \sigma^2)$, we define the quantum classifier using the parallel encoding $\matr \rho(\vect x)$ as $f(\vect x) := Tr(\matr P\mathcal E (\mathcal{E}_\phi(\matr \rho(\vect x))))$. Then the following holds:  
        \begin{enumerate}[label=(\alph*)]
        \item  The upper bound on the trace distance between the perturbed quantum smooth sates is given as            
            \begin{equation}
                \label{eq:above_bound_2}
                d(\mathcal E_\phi(\matr \rho(\vect x)), \mathcal E_\phi(\matr \rho(\vect y))) \leq 2 \Phi (\frac{\lvert \vect x - \vect y \rvert}{2\sigma}) - 1. 
            \end{equation}

        \item If $f(\vect {x_0}) \geq f_0 > \frac{1}{2}$ then, $f(\vect {x_0} + \matr \delta) > \frac{1}{2} \forall \matr \delta $ for which holds that $\lvert \matr \delta \rvert < \sigma \Phi^{-1}(f_0)$. 

    \end{enumerate}
    \label{cor:Parallel_Gaussian}
    \end{corollary}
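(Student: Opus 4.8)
The plan is to prove the two parts separately: (a) specializes Theorem~\ref{Th:Paralle_Upper_Bound} to the Gaussian density, and (b) follows after recognizing the smooth quantum classifier $f$ as the Gaussian randomized smoothing of a fixed $[0,1]$-valued base predictor, so that the certificate of \citet{cohen2019} applies. \emph{For (a)}, I would substitute the $\mathcal N(0,\sigma^2)$ density $f_\phi(t)=\tfrac{1}{\sqrt{2\pi}\,\sigma}\exp\!\big(-t^2/2\sigma^2\big)$ into the right-hand side of \eqref{eq:above_bound_1}, turning the bound into $\int_{\mathbb R}\max\!\big(f_\phi(\vect z-\vect x)-f_\phi(\vect z-\vect y),0\big)\,d\vect z$, i.e.\ the total-variation distance between $\mathcal N(\vect x,\sigma^2)$ and $\mathcal N(\vect y,\sigma^2)$. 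Since both densities have equal variance, $f_\phi(\vect z-\vect x)-f_\phi(\vect z-\vect y)$ is positive exactly where $\vect z$ is closer to $\vect x$ than to $\vect y$, i.e.\ (taking $\vect x\le\vect y$) on the half-line $\vect z\le(\vect x+\vect y)/2$; integrating $f_\phi$ there gives $\Phi\!\big(\tfrac{\vect y-\vect x}{2\sigma}\big)-\Phi\!\big(\tfrac{\vect x-\vect y}{2\sigma}\big)=2\Phi\!\big(\tfrac{\vect y-\vect x}{2\sigma}\big)-1$, which by symmetry in $\vect x\leftrightarrow\vect y$ equals $2\Phi\!\big(\tfrac{|\vect x-\vect y|}{2\sigma}\big)-1$, i.e.\ \eqref{eq:above_bound_2}.

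\emph{For (b)}, the first part of Theorem~\ref{Th:Parallel_Encoding} gives $\mathcal E_\phi(\matr\rho(\vect x))=\mathbb E_{\matr\delta\sim\mathcal N(0,\sigma^2)}[\matr\rho(\vect x+\matr\delta)]$. Because $\mathcal E$ and $Tr(\matr P\,\cdot)$ are linear and bounded and $\matr\rho(\vect x+\matr\delta)$ is uniformly bounded in trace norm (being a density matrix), the noise expectation may be pulled through, so $f(\vect x)=Tr\big(\matr P\,\mathcal E(\mathcal E_\phi(\matr\rho(\vect x)))\big)=\mathbb E_{\matr\delta\sim\mathcal N(0,\sigma^2)}[\hat f(\vect x+\matr\delta)]$ with $\hat f(\vect z):=Tr\big(\matr P\,\mathcal E(\matr\rho(\vect z))\big)\in[0,1]$. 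Thus $f$ is the Gaussian randomized smoothing of $\hat f$, and I would invoke the associated certificate — the Neyman--Pearson/likelihood-ratio argument of \citet{cohen2019} in the form valid for a $[0,1]$-valued base predictor (equivalently, that $\Phi^{-1}\circ f$ is $\tfrac1\sigma$-Lipschitz): $f(\vect{x_0}+\matr\delta)\ge\Phi\!\big(\Phi^{-1}(f(\vect{x_0}))-|\matr\delta|/\sigma\big)$ for all $\matr\delta$. With $f(\vect{x_0})\ge f_0$ and the monotonicity of $\Phi^{-1}$ and $\Phi$, this gives $f(\vect{x_0}+\matr\delta)\ge\Phi\!\big(\Phi^{-1}(f_0)-|\matr\delta|/\sigma\big)$; since $f_0>\tfrac12$ forces $\Phi^{-1}(f_0)>0$, whenever $|\matr\delta|<\sigma\Phi^{-1}(f_0)$ the argument of $\Phi$ is strictly positive, hence $f(\vect{x_0}+\matr\delta)>\Phi(0)=\tfrac12$.

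A second route for (b), self-contained within the excerpt, is: since $0\preceq\matr P\preceq\mathbb I$ we have $|Tr(\matr P(\matr{\rho_1}-\matr{\rho_2}))|\le\tfrac12\,d(\matr{\rho_1},\matr{\rho_2})$, and since a CPTP map does not increase trace distance, $|f(\vect{x_0})-f(\vect{x_0}+\matr\delta)|\le\tfrac12\,d\big(\mathcal E_\phi(\matr\rho(\vect{x_0})),\mathcal E_\phi(\matr\rho(\vect{x_0}+\matr\delta))\big)$, which by the bound established in part~(a) is at most $\Phi\!\big(\tfrac{|\matr\delta|}{2\sigma}\big)-\tfrac12$. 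Since $|\matr\delta|<\sigma\Phi^{-1}(f_0)$ forces $\Phi(|\matr\delta|/2\sigma)<\Phi(\Phi^{-1}(f_0)/2)<f_0$, this also yields $f(\vect{x_0}+\matr\delta)>f_0-(f_0-\tfrac12)=\tfrac12$.

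The calculus in (a) is not the obstacle. The delicate point is in (b): one must either cite the $[0,1]$-valued version of the Cohen et al.\ certificate or re-derive it via a one-dimensional Neyman--Pearson argument, and one must justify interchanging the noise expectation with the linear maps $\mathcal E$ and $Tr(\matr P\,\cdot)$, which is immediate from boundedness. Given the paper's emphasis on the randomized-smoothing analogy, I would present the first route for (b) as the main argument and keep the trace-distance route as a remark.
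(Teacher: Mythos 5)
Your proof follows essentially the same route as the paper: part (a) is the same total-variation computation for two equal-variance Gaussians, and part (b) likewise pulls the expectation through the linear maps to recognize $f$ as the Gaussian randomized smoothing of the base predictor $\hat f(\vect z) = Tr(\matr P\,\mathcal E(\matr\rho(\vect z)))$ and then invokes the certificate of \citet{cohen2019}. The only caution concerns your optional second route for (b): whether $\lvert Tr(\matr P(\matr{\rho_1}-\matr{\rho_2}))\rvert$ is bounded by $d$ or by $\tfrac12 d$ depends on the normalization of the trace distance (the paper's notation section defines it as the full Schatten-1 norm), and since the Neyman--Pearson certificate of \citet{cohen2019} is tight for $[0,1]$-valued base predictors, any reading of that remark that appears to certify a radius of $2\sigma\Phi^{-1}(f_0)$ signals a factor-of-two slip that should be reconciled with the convention used in Theorem~\ref{Th:Paralle_Upper_Bound} before the remark is kept.
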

Let a measurable transformation $f$ be applied to the features before encoding them using parallel encoding, that is, $\matr{\tilde \rho}(f(\vect x))$ is used as an encoding block. For $\matr \delta \sim \phi$, define a random variable $Y_{\vect x} = f(\matr \delta + \vect x) \sim \phi_{\vect x}$. The corresponding smooth encoding scheme can be written as $\matr {\rho_\phi}(\vect x) = \mathbb E_{\matr \delta}[\matr {\rho}(f(\vect x + \matr \delta))] = \ket{\gamma}\bra{\gamma} \odot [\phi_{\vect x}(\lambda_j - \lambda_i)]_{ij}$. Therefore, using Theorem \ref{Th:Parallel_Encoding} we can show that this encoding map is CPTP, and hence the overall encoding map can be realized as a quantum operation.\\

\noindent \textbf{Smooth Sequential Encoding Schemes:}
The smoothing channels discussed in Theorem \ref{Th:Parallel_Encoding} involves finding the spectral decomposition of the exponentially growing system $A_{\phi, \lambda}$\footnote{Finding the Kraus representation of the smoothing operation involves finding spectral decomposition of $A_{\phi, \lambda}$; see part (b) of Theorem \ref{Th:Parallel_Encoding}} followed by decomposing the resulting operation into the fundamental gates. Therefore, it can be efficient to view a complex parallel encoding block as a stack of simple parallel encoding blocks where encoding gates are only applied to a fixed number of qubits. For example, view a $d$ qubit parallel encoding block as a stack of $d$ single qubit parallel encoding gates as shown in Figure \ref{2:fig:qml:parallelvsseq}. %
Therefore, for sequential encoding schemes (with L parallel layers), we propose to apply $\mathcal E_\phi$ from Theorem \ref{2:eq:qml:parallel} to each of the parallel layers and then interpret the resulting smooth encoding scheme as smoothing a data encoding scheme for $L$-dimensional data\footnote{For $D$-dimensional data, we interpret sequential smoothing encoding as smoothing $LD$-dimensional data. Given that even for $1$-D data, we operate with multidimensional data for smooth sequential encoding, this framework seamlessly extends to multidimensional data}. We define a smooth sequential encoding in the following definition and formalize above arguments in Theorem \ref{Th:Sequential_Encoding}.

  \begin{definition}[Smooth Sequential Encoding]
    Let $\matr \delta \sim \phi$ and $\matr \rho(\vect x)$ be the general sequential encoding as defined in Equation \ref{2:sec:qml:sequential}. For $A_{\phi, \lambda^{(l)}} := [\phi(\lambda^{(l)}_j - \lambda^{(l)}_i)]_{ij}$ let $\mathcal E^{l}_\phi:  \rho \mapsto \rho \odot \matr{A}_{\phi, \lambda^{(l)}}$ define the smoothing quantum channel for the $l^{th}$ parallel layer, then the smooth sequential encoding scheme is: 
    \begin{multline*}
        \matr{\tilde \rho}_{L, \phi}(x) := \mathcal E^L_\phi(U_L(x)W^{(L)}... \\ \mathcal E^1_\phi(U_1(x)W^{(1)}\rho_0W^{(1)H}U_1(x)^H)...W^{(L)H}U_L(x)^H).
    \end{multline*}  \label{def:smooth_sequential_encoding}
  \end{definition}
We now present our \textbf{main} result in the direction of certifiable encoding schemes followed by its realization for Gaussian smoothing in Corollary \ref{cor:sequential_gaussian}.
\begin{theorem}[Smooth Sequential Encoding: Infinite Sampling]
    
    Let $\matr{\tilde \rho}_{L, \phi}$ denote the general smooth sequential encoding scheme from Definition \ref{def:smooth_sequential_encoding}. Further, for $\vect{z}:=[z_1, ..., z_L]$, define $\sigma_L(\vect{z}) := \matr {U_L}(z_L)\matr {W^{(L)}}...\matr {U_1}(z_1)\matr {W^{(1)}}\matr {\rho_0}\matr W^{\bm{( 1)}H}\matr {U_1}(z_1)^H \\ ...\matr W^{{\vect(L)}H}\matr {U_L}(z_L)^H$, then the following holds:
    
    \begin{enumerate}[label=(\alph*)]
        \item Let $\vect{x^{(L)}} = [x, x, ..., x] \in \mathbb R^L$, and $\svect{\delta^{(L)}}=[\delta_1, \delta_2, ..., \delta_L]$ be a random vector of iid. $\delta_i \sim \phi$, then $\matr{\tilde \rho}_{L, \phi}(\vect x) = \mathbb E_{\svect{\delta^{(L)}}}[\sigma_L(\vect{x^{(L)} }+ \matr \delta^{(L)})]$. 

        \item Additionally assume that the distribution $\phi$ has a probability distribution function $f_\phi$ and denote $d(.,.)$ as the trace distance between the density matrices then, 
            \begin{multline}                
                \label{eq:above_bound_5}
                d(\matr{\tilde \rho}_{L, \phi}(\vect x), \matr{\tilde \rho}_{L, \phi}(\vect y)) \leq \\ \int_{\mathbb R^L} \max(\prod_{i=i}^L f_\phi(z_i-\vect x) - \prod_{i=i}^L f_\phi(z_i-\vect y), 0) d\lambda(\vect{z}).
            \end{multline}

    \end{enumerate}
    \label{Th:Sequential_Encoding}
\end{theorem}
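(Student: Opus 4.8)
\textbf{Proof plan for Theorem \ref{Th:Sequential_Encoding}.}

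The plan is to reduce the sequential statement to the parallel one layer-by-layer. For part (a), I would argue by induction on the number of layers $L$. The base case $L=1$ is exactly part (a) of Theorem \ref{Th:Parallel_Encoding}: $\mathcal E^1_\phi(U_1(x)W^{(1)}\rho_0 W^{(1)H}U_1(x)^H) = \mathbb E_{\delta_1}[U_1(x+\delta_1)W^{(1)}\rho_0 W^{(1)H}U_1(x+\delta_1)^H]$, using that $W^{(1)}\rho_0 W^{(1)H}$ plays the role of $\ket{\gamma}\bra{\gamma}$ (up to writing a mixed state as a convex combination of pure states, or by noting that $\mathcal E_\phi$ is linear and the Hadamard-product characterization $\rho \mapsto \rho \odot A_{\phi,\lambda}$ does not use purity). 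For the inductive step, suppose the claim holds through layer $L-1$, so the state entering layer $L$ equals $\mathbb E_{\delta^{(L-1)}}[\sigma_{L-1}(x^{(L-1)}+\delta^{(L-1)})]$. Applying $U_L(x)W^{(L)}(\cdot)W^{(L)H}U_L(x)^H$ and then $\mathcal E^L_\phi$, I pull the conjugation by $U_L(x)W^{(L)}$ inside the expectation (linearity), then invoke Theorem \ref{Th:Parallel_Encoding}(a) on the $L$-th layer conditionally on $\delta^{(L-1)}$ to absorb a fresh independent $\delta_L$ into $U_L(x+\delta_L)$; independence of $\delta_L$ from $\delta^{(L-1)}$ and the tower property combine the two expectations into one over $\delta^{(L)}$. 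This yields $\mathbb E_{\delta^{(L)}}[\sigma_L(x^{(L)}+\delta^{(L)})]$.

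For part (b), the key observation from part (a) is that $\matr{\tilde\rho}_{L,\phi}(\vect x)$ is the expectation of the fixed matrix-valued map $\vect z \mapsto \sigma_L(\vect z)$ over the product measure $\mu_{\vect x} := \bigotimes_{i=1}^L \phi(\cdot - x)$ with density $\vect z \mapsto \prod_{i=1}^L f_\phi(z_i - \vect x)$. Since $\sigma_L(\vect z)$ is always a (pure) density matrix, $\matr{\tilde\rho}_{L,\phi}(\vect x) = \int_{\mathbb R^L} \sigma_L(\vect z)\, d\mu_{\vect x}(\vect z)$, and likewise for $\vect y$ with $\mu_{\vect y}$. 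Then I would bound
\begin{align*}
d(\matr{\tilde\rho}_{L,\phi}(\vect x), \matr{\tilde\rho}_{L,\phi}(\vect y))
&= \Bigl\lVert \int_{\mathbb R^L} \sigma_L(\vect z)\, (d\mu_{\vect x} - d\mu_{\vect y})(\vect z) \Bigr\rVert_* \\
&\leq \int_{\mathbb R^L} \lVert \sigma_L(\vect z) \rVert_* \, \lvert d\mu_{\vect x} - d\mu_{\vect y}\rvert(\vect z),
\end{align*}
using convexity/triangle inequality for the trace norm; since each $\sigma_L(\vect z)$ has trace norm $1$, the right side equals the total variation distance $\lVert \mu_{\vect x} - \mu_{\vect y}\rVert_{TV} = \int_{\mathbb R^L} \max\bigl(\prod_i f_\phi(z_i - \vect x) - \prod_i f_\phi(z_i - \vect y), 0\bigr)\, d\lambda(\vect z)$, which is the claimed bound. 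This mirrors the argument behind Theorem \ref{Th:Paralle_Upper_Bound} but with the one-dimensional density replaced by the $L$-fold product density.

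The main obstacle I anticipate is making the inductive step in part (a) fully rigorous: one must be careful that Theorem \ref{Th:Parallel_Encoding}(a) as stated applies to an \emph{initial} pure state $\ket{\gamma}\bra{\gamma}$, whereas in the sequential setting the $L$-th layer acts on a state that is itself a random mixture; handling this requires either extending Theorem \ref{Th:Parallel_Encoding}(a) by linearity to arbitrary (mixed, random) input states — which is immediate since $\mathcal E_\phi$ is linear and $\mathbb E_\delta[U(\vect x+\delta)(\cdot)U(\vect x+\delta)^H]$ is linear in its argument — or conditioning on $\delta^{(L-1)}$ and applying Fubini/Tonelli to justify interchanging the (independent) expectations. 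The trace-norm convexity step in part (b) is standard (it is the integral Minkowski inequality for the Schatten-$1$ norm), so the only real care there is measurability of $\vect z \mapsto \sigma_L(\vect z)$, which follows from continuity of the matrix exponential.
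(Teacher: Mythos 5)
Your part (a) is essentially the paper's own argument: induction on the number of layers, with the base case given by Theorem \ref{Th:Parallel_Encoding}, the representation $\mathcal E_\phi(\sigma)=\mathbb E_{\delta}[U(\delta)\,\sigma\, U(\delta)^H]$ (valid for arbitrary, not necessarily pure, inputs by linearity) used to absorb a fresh independent $\delta_{l+1}$ into the outermost layer, and Fubini together with independence to merge the iterated expectations into a single expectation over $\svect{\delta^{(l+1)}}$. Your explicit remark that Theorem \ref{Th:Parallel_Encoding}(a) must be extended from $\ket{\gamma}\bra{\gamma}$ to mixed, random inputs is exactly the right point of care and is what the paper uses implicitly.

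Part (b) is where you genuinely diverge, and where there is a quantitative slip. The paper writes $\matr{\tilde\rho}_{L,\phi}(\vect x)=\int_{\mathbb R^L}\sigma_L(\vect z)\prod_{i}f_\phi(z_i-\vect x)\,d\lambda(\vect z)$ exactly as you do, but then reuses the duality argument from the proof of Theorem \ref{Th:Parallel_Encoding}: bound the distance by $\max_{0\preceq P\preceq\mathbb I}\Tr\bigl(P(\matr{\tilde\rho}_{L,\phi}(\vect x)-\matr{\tilde\rho}_{L,\phi}(\vect y))\bigr)$, note that $\Tr(P\sigma_L(\vect z))\in[0,1]$, and observe that the supremum over $[0,1]$-valued test functions is attained by the indicator of $\{\vect z: f_{\vect x}(\vect z)\geq f_{\vect y}(\vect z)\}$, which yields exactly $\int\max(f_{\vect x}-f_{\vect y},0)\,d\lambda$. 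Your route via the integral triangle inequality for the Schatten-$1$ norm instead produces $\int\lVert\sigma_L(\vect z)\rVert_*\,\lvert f_{\vect x}(\vect z)-f_{\vect y}(\vect z)\rvert\,d\lambda(\vect z)=\int\lvert f_{\vect x}-f_{\vect y}\rvert\,d\lambda$, and since both densities integrate to one this equals $2\int\max(f_{\vect x}-f_{\vect y},0)\,d\lambda$ --- twice the bound you claim it equals. The discrepancy disappears only if one takes $d(\cdot,\cdot)=\tfrac12\lVert\cdot\rVert_*$, the normalized trace distance (which is what the duality formula and Corollary \ref{cor:sequential_gaussian} implicitly assume, despite the unnormalized definition in the background section). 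So either carry the factor $\tfrac12$ explicitly through your estimate, or --- better, and convention-independent --- replace the triangle-inequality step by the duality characterization: the latter exploits that $\Tr(P\sigma_L(\vect z))$ is bounded in $[0,1]$, not merely that $\sigma_L(\vect z)$ has unit trace norm, and that is precisely what makes the one-sided positive-part integral, rather than the full $L^1$ distance between the product densities, appear.
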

\begin{corollary}
    [Gaussian sequential Smooth Encoding Scheme: Infinite Sampling]
    Following the setup in Theorem \ref{Th:Sequential_Encoding} and using the Gaussian distribution with variance $\sigma^2$ as the smoothing distribution, i.e. $\matr \delta \sim \mathcal N(0, \sigma^2)$, define the quantum classifier using the smooth sequential encoding $\matr {\tilde \rho}(\vect x)$ as $f(\vect x) := Tr(\matr P\mathcal E (\matr {\tilde \rho}(\vect x)))$. Then the following holds:
    \begin{enumerate}[label=(\alph*)]
    \item  The upper bound on the trace distance between the perturbed quantum smooth states is given as:            
        \begin{equation}
            \label{eq:above_bound_3}
            d(\matr{\tilde \rho}_L(\vect x), \matr{\tilde \rho}_L(\vect y)) \leq 2 \Phi (\frac{\sqrt{L}\lvert \vect x - \vect y \rvert}{2\sigma}) - 1.
        \end{equation}
    \item Let $f(\vect {x_0}) \geq f_0 > \frac{1}{2}$ then, $f(\vect {x_0} + \matr \delta) > \frac{1}{2} \quad \forall \matr \delta $; $\lvert \delta \rvert < \frac{\sigma}{\sqrt{L}} \Phi^{-1}(f_0)$. 
    \end{enumerate}
    \label{cor:sequential_gaussian}
\end{corollary}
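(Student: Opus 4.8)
\emph{Proof plan.} Both claims are instantiations of Theorem~\ref{Th:Sequential_Encoding} for the Gaussian noise $\phi=\mathcal N(0,\sigma^2)$, but they use its two complementary conclusions: part~(a) specializes the trace-distance bound of Theorem~\ref{Th:Sequential_Encoding}(b), while part~(b) uses the representation in Theorem~\ref{Th:Sequential_Encoding}(a) of $\matr{\tilde\rho}_{L,\phi}$ as an $L$-dimensional Gaussian smoothing of the noiseless core $\sigma_L$.

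\emph{Part (a).} Plugging the Gaussian density $f_\phi(t)=(2\pi\sigma^2)^{-1/2}\exp(-t^2/2\sigma^2)$ into the bound of Theorem~\ref{Th:Sequential_Encoding}(b), observe that $\prod_{i=1}^L f_\phi(z_i-x)$ is the density at $\vect z$ of the isotropic Gaussian $\mathcal N(x\1,\sigma^2\mathbb I_L)$, and likewise for $y$; hence the right-hand side is the total-variation distance between $\mathcal N(x\1,\sigma^2\mathbb I_L)$ and $\mathcal N(y\1,\sigma^2\mathbb I_L)$. I would evaluate it via the likelihood-ratio (Neyman--Pearson) argument: the ratio of the two densities is a monotone function of $\1^{\top}\vect z$ only, so $\{p>q\}$ is the half-space $\{\1^{\top}\vect z>L(x+y)/2\}$, and under either Gaussian the scalar $\1^{\top}\vect z$ is normal with standard deviation $\sqrt L\,\sigma$ and mean $Lx$ (resp.\ $Ly$). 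The total-variation distance is the difference of the two half-space probabilities, which, using $\Phi(-t)=1-\Phi(t)$, equals $2\Phi\big(\tfrac{\sqrt L\,|x-y|}{2\sigma}\big)-1$ — this is Equation~\ref{eq:above_bound_3}. (Equivalently one may quote the closed form for the total-variation distance between equal-covariance Gaussians, whose Mahalanobis separation is here $\|(x-y)\1\|/\sigma=\sqrt L\,|x-y|/\sigma$.)

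\emph{Part (b).} By Theorem~\ref{Th:Sequential_Encoding}(a), $\matr{\tilde\rho}_{L,\phi}(\vect{x_0})=\E_{\svect{\delta^{(L)}}}\!\big[\sigma_L([x_0,\dots,x_0]+\svect{\delta^{(L)}})\big]$ with $\svect{\delta^{(L)}}\sim\mathcal N(0,\sigma^2\mathbb I_L)$ (i.i.d.\ entries). Since $\mathcal E$ and $\mathrm{Tr}(\matr P\,\cdot)$ are linear, the expectation pulls out:
\[
f(\vect x)=\mathrm{Tr}\big(\matr P\,\mathcal E(\matr{\tilde\rho}_{L,\phi}(\vect x))\big)=\E_{\svect{\delta^{(L)}}}\!\big[h([x,\dots,x]+\svect{\delta^{(L)}})\big],\qquad h(\vect z):=\mathrm{Tr}\big(\matr P\,\mathcal E(\sigma_L(\vect z))\big).
\]
As $\sigma_L(\vect z)$ is a density matrix, $\mathcal E$ is CPTP, and $0\preceq\matr P\preceq\mathbb I$, we get $h:\mathbb R^L\to[0,1]$. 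Thus, on the diagonal line $\{[x,\dots,x]:x\in\mathbb R\}$, $f$ is exactly a Gaussian-smoothed $[0,1]$-valued function, and the randomized-smoothing guarantee of \citet{cohen2019} applies: $\vect z\mapsto\Phi^{-1}\big(\E_{\svect{\delta^{(L)}}}[h(\vect z+\svect{\delta^{(L)}})]\big)$ is $\tfrac1\sigma$-Lipschitz on $\mathbb R^L$. Evaluating at the points $[x_0,\dots,x_0]$ and $[x_0+\delta,\dots,x_0+\delta]$, whose Euclidean distance is $\sqrt L\,|\delta|$, gives
\[
\Phi^{-1}\big(f(\vect{x_0}+\delta)\big)\ \ge\ \Phi^{-1}\big(f(\vect{x_0})\big)-\frac{\sqrt L\,|\delta|}{\sigma}\ \ge\ \Phi^{-1}(f_0)-\frac{\sqrt L\,|\delta|}{\sigma},
\]
which is positive — hence $f(\vect{x_0}+\delta)>\tfrac12$ — exactly when $|\delta|<\tfrac{\sigma}{\sqrt L}\,\Phi^{-1}(f_0)$.

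\emph{Anticipated obstacle.} Neither step is analytically hard; the subtle point is in part~(b): I must check that handing the \emph{noiseless} core $\sigma_L$ (which contains none of the dephasing channels $\mathcal E^l_\phi$ that sit inside $\matr{\tilde\rho}_{L,\phi}$) to the smoothing lemma is legitimate, i.e.\ that Theorem~\ref{Th:Sequential_Encoding}(a) really transfers \emph{all} the injected noise into the $L$-fold product-Gaussian expectation, and that $h$ indeed lands in $[0,1]$ so that the lemma (stated for probabilities) applies verbatim. The $\sqrt L$ in the radius then comes purely from a geometric mismatch: perturbing the scalar $x_0$ by $\delta$ displaces the replicated point $[x_0,\dots,x_0]\in\mathbb R^L$ by $\sqrt L\,|\delta|$ in the $\ell_2$ metric to which randomized smoothing is calibrated.
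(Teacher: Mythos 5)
Your proof is correct and follows essentially the route the paper intends: the paper gives no separate proof of Corollary~\ref{cor:sequential_gaussian}, leaving it as the combination of Theorem~\ref{Th:Sequential_Encoding} with the arguments already written out for Corollary~\ref{cor:Parallel_Gaussian} (explicit evaluation of the Gaussian overlap integral for part (a), pulling the expectation through the linear maps and invoking the randomized-smoothing guarantee for part (b)), which is exactly what you do. Your Neyman--Pearson reduction of the $L$-dimensional integral to the sufficient statistic $\1^{\top}\vect z$ in part (a), and your explicit bookkeeping of the $\sqrt{L}\,|\delta|$ displacement of the replicated diagonal point before applying the $\tfrac{1}{\sigma}$-Lipschitz property of $\Phi^{-1}$ composed with the smoothed score in part (b), correctly supply the two details the paper leaves implicit.
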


\subsection{Smoothing single-qubit Pauli rotation gates}
Many encoding schemes in the literature \cite{expo_growing, quantum_enhanced_fs, schuld_effect_of_de} utilize single-qubit Pauli rotation gates, such as $RX$, $RY$, and $RZ$, to encode data. We demonstrate the applicability of Theorem~\ref{Th:Sequential_Encoding} to construct a smooth encoding scheme for such a single-qubit data encoding block. We detail the case where the data encoding block utilizes the Pauli-$Z$ rotation gate $RZ$; other cases easily follow. We establish in Theorem~\ref{th:singlesmooth} that the smoothing channel required to smooth $RZ$, according to Theorem \ref{Th:Parallel_Encoding}, is a phase damping channel.%
\begin{theorem}[Smooth $RZ$ data encoding block]
    \label{th:singlesmooth}
    Denote the single qubit subsystem with the $RZ$ encoding block as $\matr S$ and the remaining subsystem as $\matr R$. Let $RZ(\vect x) \otimes \mathbb I_{\matr R}$ be the data encoding circuit that acts on the system $d$ qubit system $\matr{SR}$. For a random variable $\matr \delta$ with real characteristic function $\phi$, we define $\lambda := 1 - (\phi(1))^2$. Let $\mathcal N_{PD}(\lambda)$ be a phase damping noise channel with parameter $\lambda$\footnote{For zero-centered Gaussian distribution with standard deviation $\sigma$, $\lambda = 1 - \exp(-\sigma^2)$.}. Then, $\forall \matr \rho \in \mathcal{S}_+^{d(1)}$:
    \begin{multline}
       \mathcal N_{PD}(RZ(\vect x) \otimes \mathbb I_{\matr R} \text{ } \matr \rho \text{ } RZ(\vect x)^H \otimes \mathbb I_{\matr R}) = \\ \mathbb{E}_{\matr \delta \sim \phi}[RZ(\vect x + \matr \delta) \otimes \mathbb I_{\matr R} \text{ } \matr \rho \text{ } RZ(\vect x + \matr \delta)^H \otimes \mathbb I_{\matr R}].         
    \end{multline}
\end{theorem}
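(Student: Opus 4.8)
The plan is to work entirely in the computational basis of the joint system $\matr{SR}$ and reduce the claim to the elementary fact that a phase‑damping channel acts as a Hadamard multiplication which rescales the inter‑sector coherences of $\matr S$ by a fixed factor. Write the basis as $\{\ket{s}\otimes\ket{r}\}$ with $s\in\{0,1\}$ indexing the $RZ$‑qubit $\matr S$. In the standard convention the encoding unitary $\matr U(\vect x):=RZ(\vect x)\otimes\mathbb I_{\matr R}$ is diagonal with $(\ket{s}\ket{r})$‑entry $e^{-\iota\mu_s\vect x}$ depending only on $s$, where $\mu_0-\mu_1=1$. Hence conjugation gives $(\matr U(\vect x)\matr\rho\,\matr U(\vect x)^H)_{(s,r),(s',r')} = e^{-\iota(\mu_s-\mu_{s'})\vect x}\,\matr\rho_{(s,r),(s',r')}$ with $\mu_s-\mu_{s'}\in\{0,\pm1\}$.

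First I would push the expectation through this expression. Since $\matr\delta$ is independent of $\vect x$ and $\phi(t)=\mathbb E[e^{\iota t\matr\delta}]$,
\begin{align*}
\mathbb E_{\matr\delta\sim\phi}\!\left[e^{-\iota(\mu_s-\mu_{s'})(\vect x+\matr\delta)}\right] = e^{-\iota(\mu_s-\mu_{s'})\vect x}\,\phi\big(-(\mu_s-\mu_{s'})\big).
\end{align*}
Because $\phi$ is real we have $\phi(0)=1$ and $\phi(-1)=\overline{\phi(1)}=\phi(1)$, so the scalar factor equals $1$ when $s=s'$ and $\phi(1)$ when $s\neq s'$. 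Therefore $\mathbb E_{\matr\delta}[\matr U(\vect x+\matr\delta)\matr\rho\,\matr U(\vect x+\matr\delta)^H]=\matr M\odot(\matr U(\vect x)\matr\rho\,\matr U(\vect x)^H)$, where $\matr M$ is the $2^d\times 2^d$ matrix that is $1$ on the two blocks with matching $\matr S$‑label and $\phi(1)$ on the two off‑diagonal blocks. (This is exactly Theorem~\ref{Th:Parallel_Encoding}(a) applied to the single qubit $\matr S$ and tensored with the identity channel on $\matr R$; I would either cite it or keep the one‑line computation above.)

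Next I would identify the map $\matr M\odot(\cdot)$ with $\mathcal N_{PD}(\lambda)$. Taking the phase‑damping Kraus operators on $\matr S$ as $\matr K_0=\mathrm{diag}(1,\sqrt{1-\lambda})$ and $\matr K_1=\mathrm{diag}(0,\sqrt{\lambda})$ and padding with $\mathbb I_{\matr R}$, a direct block computation shows $(\matr K_0\otimes\mathbb I_{\matr R})\matr\sigma(\matr K_0\otimes\mathbb I_{\matr R})^H+(\matr K_1\otimes\mathbb I_{\matr R})\matr\sigma(\matr K_1\otimes\mathbb I_{\matr R})^H=\matr M'\odot\matr\sigma$, where $\matr M'$ is $1$ on the matching‑$\matr S$ blocks (using $(1-\lambda)+\lambda=1$ on the $(1,1)$ block) and $\sqrt{1-\lambda}$ on the off‑diagonal blocks. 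Setting $\sqrt{1-\lambda}=\phi(1)$, i.e.\ $\lambda=1-(\phi(1))^2$, gives $\matr M'=\matr M$; and since $|\phi(1)|\le1$ for any characteristic function, $\lambda\in[0,1]$, so $\mathcal N_{PD}(\lambda)$ is a legitimate channel. Applying this identity with $\matr\sigma:=RZ(\vect x)\otimes\mathbb I_{\matr R}\ \matr\rho\ RZ(\vect x)^H\otimes\mathbb I_{\matr R}$ yields the claimed equality.

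The only real obstacle is bookkeeping: keeping the $\matr S/\matr R$ tensor factorisation explicit so the phase‑damping channel is unambiguously the one acting as $\mathcal N_{PD}(\lambda)\otimes\mathrm{id}_{\matr R}$, and making sure the realness hypothesis on $\phi$ is genuinely used — it is essential, since otherwise the Hadamard factors on the $(0,1)$ and $(1,0)$ blocks would be the complex conjugates $\phi(1)$ and $\overline{\phi(1)}$, whereas a phase‑damping channel can only produce the single real scalar $\sqrt{1-\lambda}$ on both. A secondary point to flag is the $RZ$ normalisation: the eigenvalue gap $\mu_0-\mu_1$ must equal $1$ (true for $RZ(\theta)=e^{-\iota\theta Z/2}$ and for $\mathrm{diag}(1,e^{\iota\theta})$ alike); a different gap $c$ would instead give $\lambda=1-(\phi(c))^2$, and the analogous statements for $RX,RY$ follow by the corresponding single‑qubit change of basis absorbed into the adjacent VQC gates.
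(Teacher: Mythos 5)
Your proposal is correct and follows essentially the same route as the paper: compute the expectation entrywise to see that the $\matr S$-coherences are damped by the factor $\phi(1)$ while the populations are untouched, then identify this Hadamard action with the phase-damping channel via $\sqrt{1-\lambda}=\phi(1)$. You are in fact somewhat more explicit than the paper --- you write out the Kraus operators and handle the $\matr R$ factor by a direct block computation rather than the paper's reduction to product states $\rho_1\otimes\rho_2$ --- and your remarks on the realness of $\phi$ and the eigenvalue-gap convention are accurate (the only implicit assumption, shared with the paper, is $\phi(1)\ge 0$ so that $\sqrt{1-\lambda}=\phi(1)$ is solvable).
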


\subsection{Additional Remarks}
We end this section with important remarks concerning smooth $RY$, $RZ$ data encoding blocks, and single-qubit data encoding blocks employing nonlinear transformations to the features prior to encoding.

\paragraph{Smoothing $RY$ and $RZ$ data encoding blocks} Using spectral decomposition of $RX$ and $RY$ gates they can be written as $R = V (\matr{RZ}) V^H $. Let, the Kraus representation of $\mathcal N_{PD}$ be represented by $\{E_k\}_k$, then the data encoding block $R(\vect x)$ can be smoothed using the noise channel with Kraus representation $\{V E_k V^H\}_k$.

\paragraph{Nonlinear features} Consider the encoding scheme presented in \citet{quantum_enhanced_fs}, certifying such schemes would require smoothing of the encoding layer of type $RZ(x_1x_2)$. It can be done using data-dependent phase damping layer $\mathcal{N}_{PD}(\lambda)$, with 
\begin{equation}
    \label{remark_eq_nonlinear}
    \lambda = 1 - (\phi(x_1) \phi(x_2) \mathbb{E}_{\substack{\delta_1 \sim \phi \\ \delta_2 \sim \phi}}[\exp(-\iota \delta_1 \delta_2)])^2.
\end{equation} 
Note that \autoref{remark_eq_nonlinear} assumes that each dimension is smoothed using the i.i.d. distribution $\phi$. Furthermore, for Gaussian smoothing $\lambda = 1 - \frac{\exp(-\sigma^2(\vect x_1^2 + \vect x_2^2))}{1 + \sigma^4}$.

\paragraph{Guarantees vs Computational cost} 
Notably, from Corollary \ref{cor:sequential_gaussian}, if we break down a $d$ qubit parallel layer as a stack of $L$ smaller sequential encoding layers the certifiable guarantees decrease as $\mathcal{O}(\frac{1}{\sqrt{L}})$. On the other hand we have to perform eigenvalue decomposition of $L$ matrices of dimension $2^{\frac{d}{L}} \times 2^{\frac{d}{L}}$, thus reducing the computational cost.
\label{sec:methods}

\section{Experiments}
In the following, we conduct an empirical evaluation of our proposed certificate and examine the susceptibility of the smoothed models to gradient-based attacks. To evaluate and compare against formal certificates, we use Projected Gradient Decent (PGD) to compute empirical robustness~\cite{madry2019deep}. In a PGD attack, we try to find an instance in the desired neighborhood $\mathcal B_{\vert \lvert \cdot \rvert \rvert, \epsilon}(\vect x)$ that maximizes some loss. We use gradient descent to solve the maximization of the Binary Cross Entropy (BCE) loss and after every iteration, we project the output back to $\mathcal B_{\vert \lvert \cdot \rvert \rvert, \epsilon}(\vect x)$. We analyze three datasets: (a) the TwoMoons dataset from \citet{scikit-learn}, (b) an artificially generated Annular dataset with known ground truth, and (c) the MNIST handwritten digits \cite{deng2012mnist}. 

We use the strategy proposed by \citet{expo_growing} to obtain a Fourier series representation of the data with the spectrum size increasing exponentially with the qubits. Within each data encoding layer ${L_i}$ with $N$ qubits, we encode a single feature utilizing the gates $\{RZ(2^k x_{i})\}_{k \in \{0,1, \ldots, N-1\}}$. We refer to such an encoding layer as exponential. On the contrary, if we use the gates $RZ(x_{i})$ throughout the layer, we refer to it as linear. We use two strategies to smooth an exponential layer, namely \textit{exponential} and \textit{uniform} smoothing. In exponential smoothing, we smooth $RZ(2^k x_{i})$ with respect to $x_i$ by applying the phase damping channel $\mathcal N_{PD}(1 - \exp(-2^{2k}\sigma^2))$. For uniform, we smooth the encoding $RZ(2^k x_{i})$ with respect to $2^k x_i$, using the channel $\mathcal N_{PD}(1 - \exp(- \sigma^2))$. Let a variational QML model encode every feature $x_i$ using $L$ exponential layers with $N$ qubits each and $\underline p$ ($\overline{p}$) be the lower (upper) bound of the probability $p$ that the classifier predicts class $1$, then using the arguments of Theorem \ref{Th:Sequential_Encoding} we have the following guarantees:

\noindent\textbf{Exponential Smoothing:}
The model remains robust for perturbation less than $\lvert \lvert \matr \delta \rvert \rvert_2 < \frac{\sigma}{\sqrt{L}}\Phi^{-1}(\underline p)$ ($\lvert \lvert \matr \delta \rvert \rvert_2 < \frac{\sigma}{\sqrt{L}}\Phi^{-1}(1 - \overline p)$ if $p < 0.5$).

\noindent\textbf{Uniform Smoothing:}
The model remains robust for perturbation less than $\lvert \lvert \matr \delta \rvert \rvert_2 < \frac{\sigma}{\sqrt{4^{N-1} \frac{L}{3}}}\Phi^{-1}(\underline p)$ ($\lvert \lvert \matr \delta \rvert \rvert_2 < \frac{\sigma}{\sqrt{4^{N-1} \frac{L}{3}}}\Phi^{-1}(1 - \overline p)$ if $p < 0.5$).

In our evaluations of the smooth classifier through the addition of appropriate noise channels, we perform multiple runs of the QNN and use the $95\%$ confidence Clopper-Pearson interval to ascertain the lower bound of the smooth classifier\footnote{Although we realize smoothing as a quantum operation; QML models are still stochastic.}. 

\subsection{TwoMoons Dataset}
We first evaluate our framework on the TwoMoons dataset from Scikit-Learn \cite{scikit-learn}  comprising $250$ data points, which we partition into an $80-20$ train-test split. Subsequently, we train a variational quantum model, using exponential parallel layers ${L_i}, {i\in \{0,1,2,3\}}$. We encode a single feature using the gates $\{RZ(2^k x_{i\%2})\}_{k \in \{0,1,2,3\}}$. We employ the TwoLocals layers as intermediate parameterized layers, followed by the Real Amplitude layers acting as the final parameterized layers\footnote{Refer to Qiskit's documentation \cite{Qiskit} for RealAmplitudes and TwoLocal layers}. We train the model using BCE loss to achieve close $100\%$ accuracy on the test dataset. For smoothing, we consider both the exponential and the uniform strategies. 

Although the potential guarantees exponentially decrease with qubits in a single layer ($N$) for uniform smoothing, for $N \approx 4$ we observe that the guarantees obtained in this case outperform the guarantees for exponential smoothing (see \autoref{fig:certi_acc_moon}). This could be due to the removal of high frequencies in the exponential smoothing case, which also impacts the learning. Note that we need to retrain the exponentially smoothed models to achieve comparable performance, while the uniformly smoothed models do not require retraining for $\sigma \leq 0.75$. This could be attributed to our observation in \autoref{fig:kernels}. 
We observe that performance (validation accuracy and loss) decreases significantly with increasing $\sigma$ for exponential smoothing compared to uniform smoothing. Similar trends are observed in attacking the smooth classifiers as depicted in \autoref{fig:grad_attack_moon}. 

\begin{figure}[h]
\centering
\begin{subfigure}
  \centering
  \includegraphics[width=0.8\linewidth]{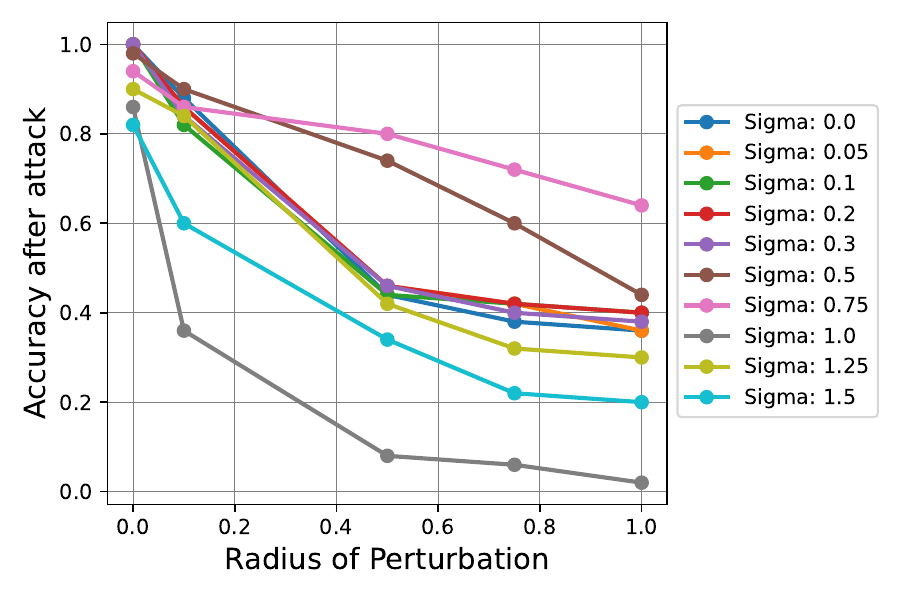}
\end{subfigure}
\begin{subfigure}
  \centering
  \includegraphics[width=0.8\linewidth]{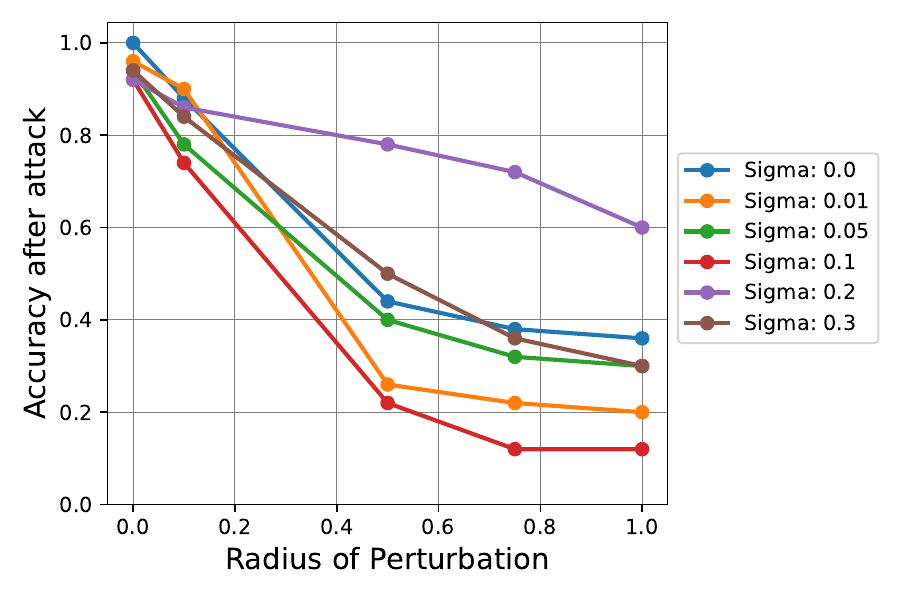}
\end{subfigure}
\caption{Accuracy upon gradient-based attacks for smooth classifiers on TwoMoons Dataset with different values of $\sigma$. (a) All data encoding blocks are smoothed using the Phase-Damping noise channels with parameter $\lambda = 1 - \exp(-\sigma^2)$. (b) All data encoding blocks encoding $\alpha x$ are smoothed using Phase-Damping noise channels with parameter $\lambda = 1 - \exp(-\alpha^2 \sigma^2)$.}
\label{fig:grad_attack_moon}
\end{figure}
\begin{figure}[h]
 \centering
\begin{subfigure}
  \centering
  \includegraphics[width=0.8\linewidth]{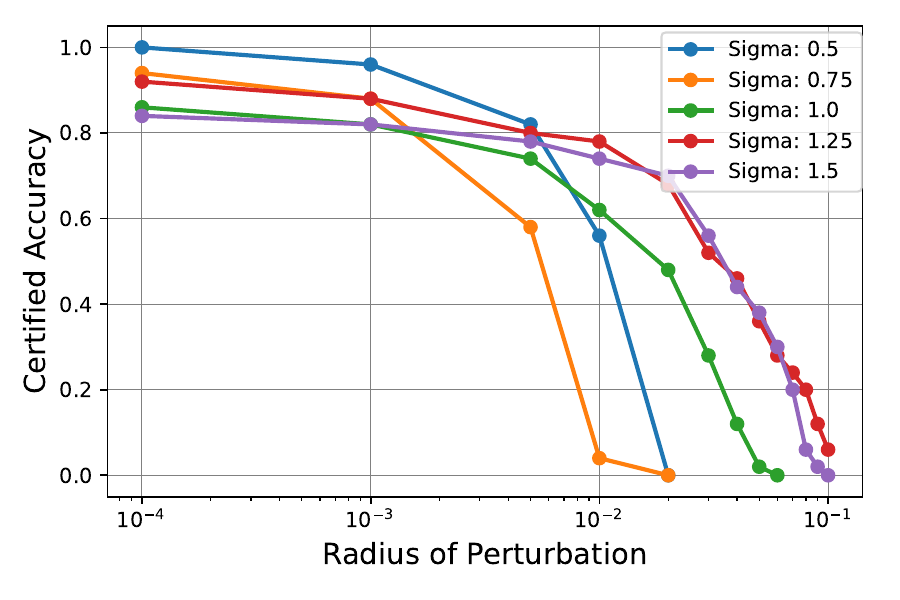}
\end{subfigure}
\begin{subfigure}
  \centering
  \includegraphics[width=0.8\linewidth]{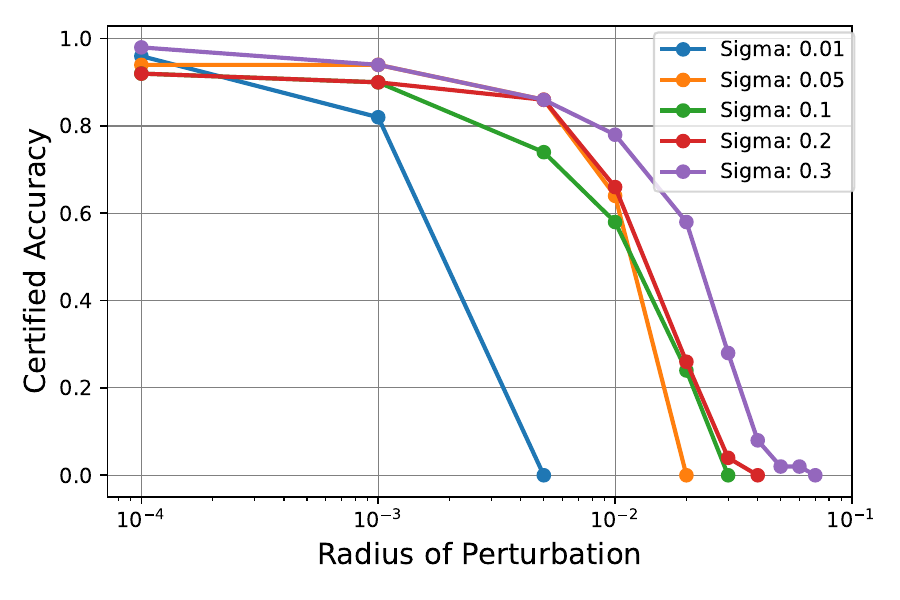}
\end{subfigure}
\caption{Fraction of test dataset correctly classified and certified against the radius of perturbation for smooth classifiers on TwoMoons Dataset with different values of $\sigma$. (a) All data encoding blocks are smoothed using phase damping noise channels with parameter $\lambda = 1 - \exp(-\sigma^2)$. (b) All data encoding blocks encoding $\alpha x$ are smoothed using phase damping noise channels with parameter $\lambda = 1 - \exp(-\alpha^2 \sigma^2)$.}
\label{fig:certi_acc_moon}
\end{figure}

\subsection{Annular Dataset}
We proceed by evaluating our framework on an artificially constructed Annular dataset based on the following ground truth $y(\vect x) = 1 - \mathbbm 1[0.3 < \lvert\lvert \vect x \rvert\rvert_2 \le 0.8]$. Access to the ground truth allows us to check if the semantic meaning of the data is preserved when attacking with gradient-based methods.  We only consider an attack successful if the prediction of the model changes, but the semantics remain the same. In this experiment, we do kernel-based training for our model. Similarly to \citet{gaussian_pennylane}, we approximate the $2-D$ Gaussian kernel using two layers of exponential encoding with $10$ qubits. To speed up the experiment, we derive the symbolic representation of the Fourier series with exponential and uniform smoothing and use that to evaluate the kernel and its gradient w.r.t. the data. We plot the accuracy after attacking the uniformly and exponentially smoothed models against the radius of the attack in Figure \ref{fig:grad_attack_annular}.

This toy example allows us to study the behavior of the exponential and the uniform smoothing that we observed in previous experiments (Figure \ref{fig:grad_attack_moon}, \ref{fig:certi_acc_moon}). The uniformly smoothed models are easier to train and have better provable and empirical robustness. We plot the rescaled kernel evaluations obtained from the quantum circuit without smoothing, with uniform smoothing and exponential smoothing in Figure $\ref{fig:kernels}$. In general, we observe that for higher values of $\sigma$ the shape of the uniform encoding remains closer to the actual non-smooth kernel while exponential smoothing seemingly penalizes high frequencies more. We attribute this to be the reason for the good performance observed in the case of uniform smoothing even for larger values of $\sigma$. 

\begin{figure}[h]
 \centering
\begin{subfigure}
  \centering
  \includegraphics[width=0.8\linewidth]{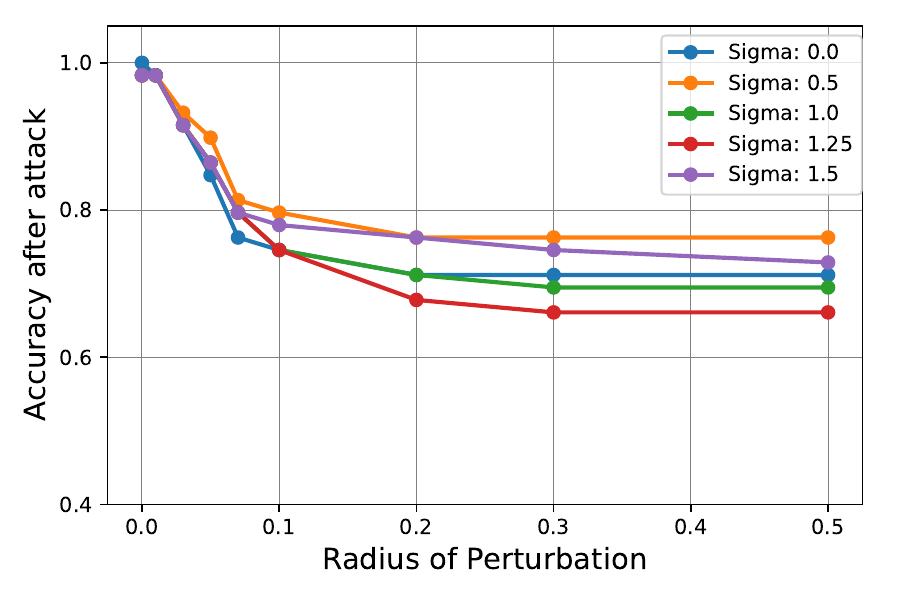}
\end{subfigure}
\begin{subfigure}
  \centering
  \includegraphics[width=0.8\linewidth]{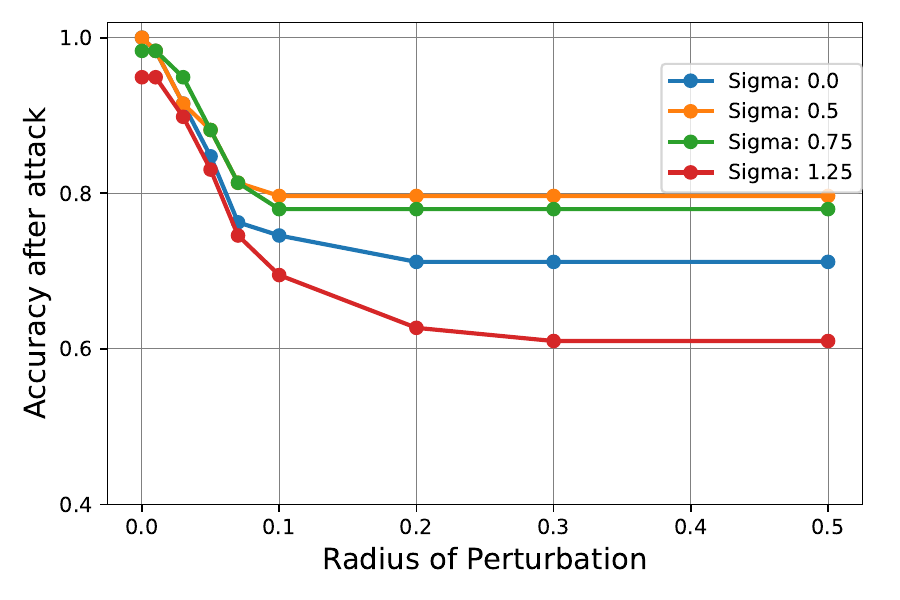}
\end{subfigure}
\caption{Accuracy upon gradient-based attacks for smooth classifiers on artificially generated Annular Dataset with different values of $\sigma$. (a) All data encoding blocks are smoothed using phase damping noise channels with parameter $\lambda = 1 - \exp(-\sigma^2)$. (b) All data encoding blocks encoding $\alpha x$ are smoothed using Phase-Damping noise channels with parameter $\lambda = 1 - \exp(-\alpha^2 \sigma^2)$.}
\label{fig:grad_attack_annular}
\end{figure}
\begin{figure}[h]
\centering
  \centering
  \includegraphics[width=.8
\linewidth]{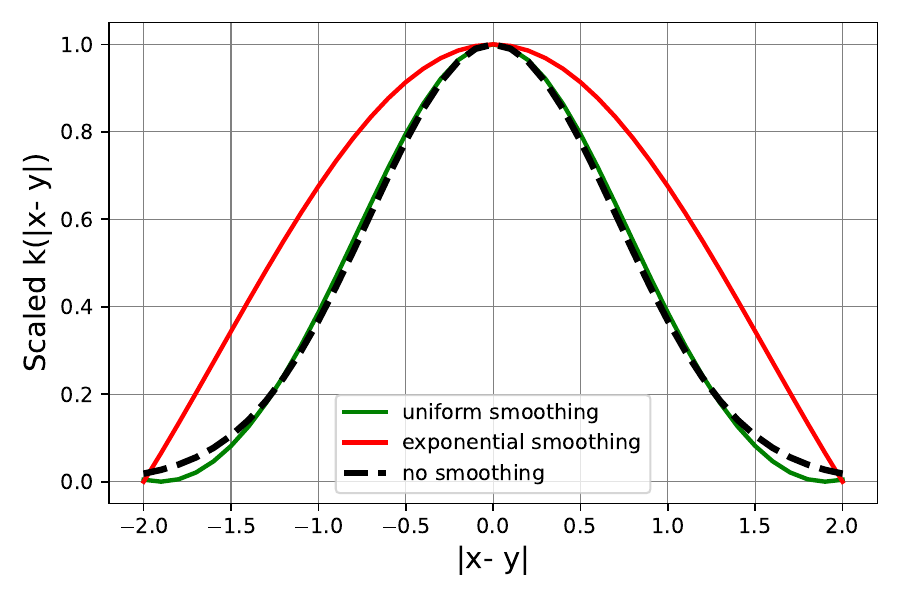}
\caption{Kernel functions obtained from the Quantum circuit without smoothing(approximation of Gaussian), with exponential smoothing and uniform smoothing for $\sigma = 1.5$. The smooth kernels are scaled back to 0, 1.}
\label{fig:kernels}
\end{figure}

\subsection{MNIST}
We tackle a binary classification problem using the $28 \times 28$ MNIST \cite{deng2012mnist} handwritten digits dataset, focusing solely on images of digits $0$ and $1$. We adopted a hybrid classical-quantum model for classification. We convert a $784$-dimensional input data into a $6$-dimensional intermediate representation by employing a trainable linear layer and further process it using a QML model.
For the quantum component, we use a $3$-qubit system and employ $4$ data encoding layers that encode the data using $RZ$ gates. The first and third layers encoded the first three dimensions, whereas the second and fourth layers encoded the remaining three dimensions of the intermediate state. Each encoded layer was preceded and succeeded by a parameterized TwoLocal circuit. Subsequently, after the data encoding block, we apply a RealAmplitudes parameterized layer. Finally, all qubits were measured and the probability of an odd number of $1's$ in the measured binary string was interpreted as the likelihood of predicting class $1$ (that is, the digit $1$).

We randomly selected 200 samples from the training split of the MNIST dataset, ensuring an equal number of samples from each class. The model was trained to minimize the BCE Loss. For model evaluation, we randomly selected $1000$ images ($500$ from each class) from the validation split. We used phase damping noise channels, $\mathcal N_{PD}(1 - \exp(-\sigma^2))$ to smooth the intermediate representation using a zero-centered Gaussian distribution with $\sigma$ as the standard deviation. We categorized our results into the following two segments: \textit{formal guarantees} and \textit{empirical attack-based} evaluation.\\

\noindent\textbf{Formal Guarantees:}
To certify for perturbations in the input images, we first employ our framework to guarantee robustness in the intermediate states. Then we evaluate the spectral norm of the classical linear layer to extend the guarantees to the perturbation in the actual images. For different values of $\sigma$, we show the variation of the certified ratio, the certified accuracy with the strength of the potential attack, using the $L2$ norm in Figure \ref{fig:MNIST_SMOOTH_CERTI_RATIO_acc}. We also plot the variation of accuracy with different values of $\sigma$.

The accuracy remains nearly $100\%$ for all values of $\sigma$ between $0$ and $1$, therefore the plots for the certified ratio and the certified accuracy exhibit similar trends. A decrease in accuracy with the increase in $\sigma$ is reflected in a decrease in certified accuracy.
Furthermore, it should be noted that the peak in the certified ratio is for $\sigma \in \{0.25, 0.5\}$. With an additional increase in $\sigma$, a large fraction of images cannot be certified. This phenomenon may be attributed to over-smoothing, leading to a greater number of points being situated close to the decision boundary.
\begin{figure}[h]
\begin{subfigure}
  \centering
  \includegraphics[width=0.95\linewidth]{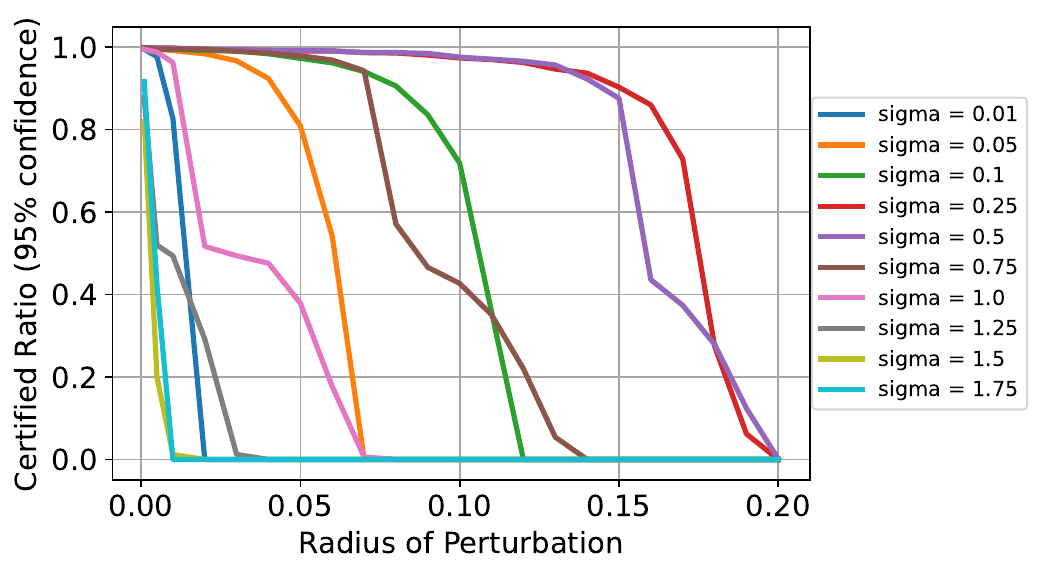}
\end{subfigure}%
\begin{subfigure}
  \centering
  \includegraphics[width=0.95
\linewidth]{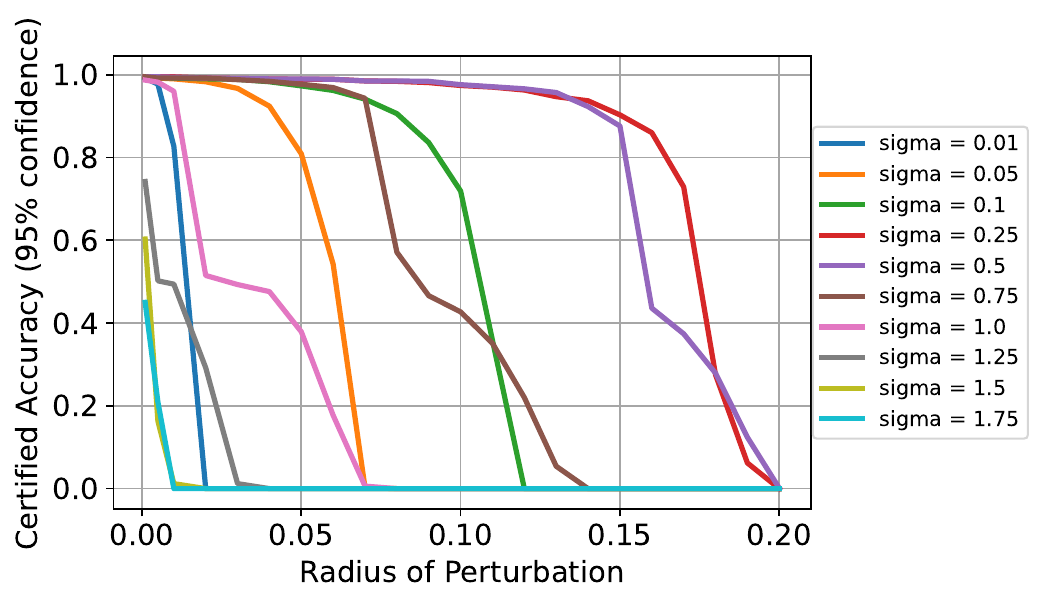}
\end{subfigure}
\begin{subfigure}
  \centering
  \includegraphics[width=0.75
\linewidth]{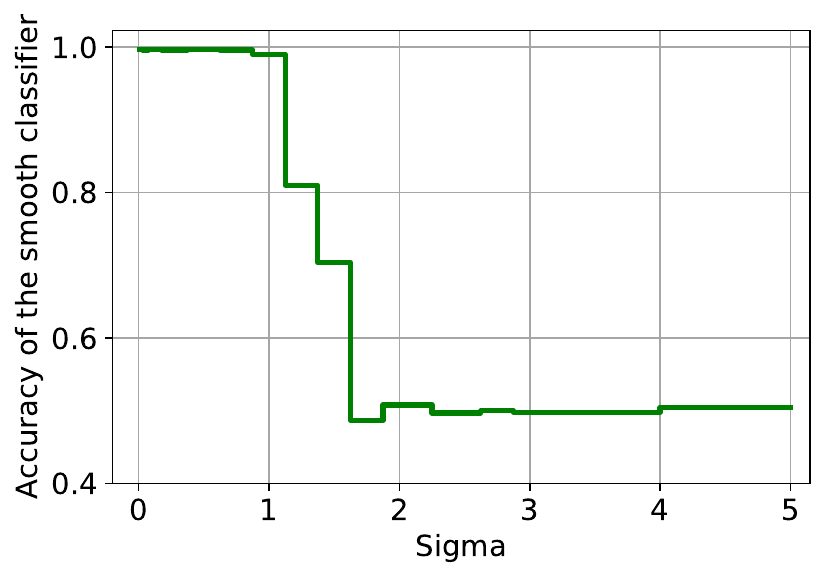}
\end{subfigure}
\caption{(a) Certified ratio and (b) Certified accuracy, against a given radius of perturbation in the $L2$ norm, (c) Accuracy of the smooth classifier, for different values of $\sigma$}
  \label{fig:MNIST_SMOOTH_CERTI_RATIO_acc}
\end{figure}
The critical insight from these plots is that our model achieves accuracy levels comparable to the base classifier within the range of $\sigma$ values between $0.25$ and $0.5$. Remarkably, almost $90\%$ of the images demonstrate certifiable robustness to perturbations within an $L_2$ bound of $0.15$. Notably, achieving these guarantees merely entails the addition of appropriate noise channels\footnote{To enhance the performance of the smooth classifier one could also fine-tune the smooth circuit by training it on the training dataset}, providing the user with an increase certified robustness while having no decrease in predictive performance.

\noindent\textbf{Projected Gradient Decent based attacks:}
Given the evaluation of the formal guarantees, we next assess the empirical increase in the robustness of our approach. We compare the impact of gradient-based attacks on the accuracy of the base classifier with the certified accuracies of the smooth classifier for values of $\sigma$ near $0.25$ and $0.5$. The accuracy under attack serves as an upper bound to the actual certified accuracy, acting as a surrogate and an upper bound for the actual certification in our guarantee evaluation. We show these comparisons in Figure \ref{fig:grad_attack_AND_ACC}. The gap between the certification and the attack is very significant. 
\begin{figure}[t]
\centering
\begin{subfigure}
  \centering
  \includegraphics[width=0.8\linewidth]{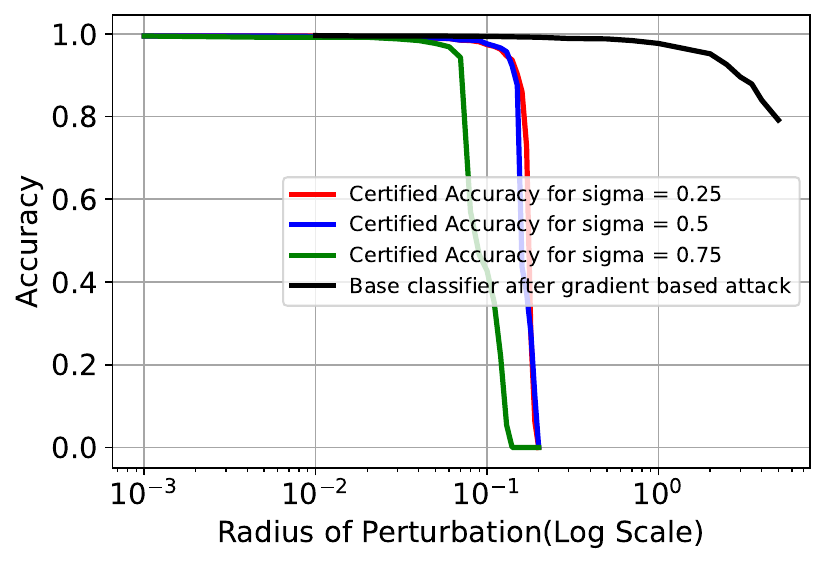}
\end{subfigure}

\begin{subfigure}
  \centering
  \includegraphics[width=0.8\linewidth]{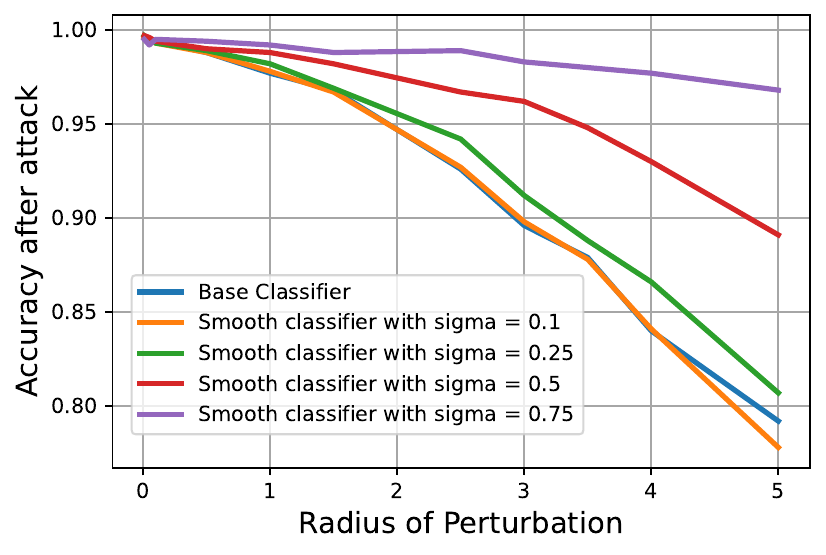}
\end{subfigure}%
\caption{(a) Gap between the certified accuracy and accuracy of the base classifier upon gradient-based attacks. (b) Accuracy upon gradient-based attacks for smooth classifiers with different values of $\sigma$}
\label{fig:grad_attack_AND_ACC}
\end{figure}
Secondly, in Figure \ref{fig:grad_attack_AND_ACC}, we evaluate the effect of these attacks on the smooth classifier for different values of $\sigma$. We see that increasing $\sigma$ from $0.1$ to $0.75$, makes it difficult for the PGD attack to influence the model's prediction. In addition, it highlights that the gap between the upper bound (attack-based accuracy) and the lower bound (certification using our approach) is even greater. To illustrate this, observe the disparity between the attacked base classifier and the certified smooth classifier for $\sigma = 0.75$ in Figure \ref{fig:grad_attack_AND_ACC} (a) and further contrast between the attacked base classifier with the attacked smooth classifier for $\sigma = 0.75$ in Figure \ref{fig:grad_attack_AND_ACC} (b).

\label{sec:Experiments}

\section{Conclusion}
Our work is the first to certify QML models with near-term general encoding schemes against adversaries manipulating data in the classical domain. We demonstrate that smoothing a Hamiltonian-based encoding into parallel subsystems can be mathematically represented as a quantum noise channel. Sequentially stacking these smooth parallel layers certifies the robustness of quantum models. However, our extensive experimental evaluation reveals a significant gap between what we can certify and the influence of gradient-based attacks on the model's prediction.

Future work should investigate improving robustness guarantees by incorporating details of the underlying base classifier. Additionally, empirically studying the impact of added noise on the training dynamics and generalization of QML models is another important direction. Nevertheless, this work lays a foundational step toward using classical robustness certification techniques in QML.
\label{sec:conclusion}

\appendix
We start with stating the following Lemma suggesting that the characteristic function defines a valid kernel. %
    \begin{lemma}
    \label{lemma:phikernel}
        Let $(\mathbb R, \Sigma, \mu)$ define a Borel probability space with $\phi$ as the characteristic function, then $\kappa: \mathbb R \times \mathbb R \mapsto \mathbb C$; $\kappa(x,y) := \phi(x-y)$ is a kernel function.
     \end{lemma}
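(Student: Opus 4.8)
The plan is to verify directly that $\kappa$ satisfies the two defining properties of a complex-valued kernel: Hermitian symmetry, and positive semi-definiteness of every finite Gram matrix. The key tool is the integral representation $\phi(t) = \int_{\mathbb R} e^{\iota t s}\, d\mu(s)$, which rewrites $\kappa$ as an inner product of elementary "Fourier" feature vectors; this is essentially the elementary discrete half of Bochner's theorem.

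First I would record the basic regularity: since $|e^{\iota t s}| = 1$ and $\mu$ is a probability measure, $\phi$ is well defined and bounded, and because $\mu$ is supported on $\mathbb R$ we have $\phi(-t) = \overline{\phi(t)}$. Hence $\kappa(y,x) = \phi(y-x) = \overline{\phi(x-y)} = \overline{\kappa(x,y)}$, giving Hermitian symmetry. For positive semi-definiteness, I would fix arbitrary points $x_1,\dots,x_m \in \mathbb R$ and coefficients $c_1,\dots,c_m \in \mathbb C$ and compute
\begin{align*}
\sum_{i,j=1}^{m} c_i \overline{c_j}\, \kappa(x_i, x_j) &= \sum_{i,j=1}^{m} c_i \overline{c_j} \int_{\mathbb R} e^{\iota (x_i - x_j) s}\, d\mu(s) \\
&= \int_{\mathbb R} \Bigl| \sum_{i=1}^{m} c_i e^{\iota x_i s} \Bigr|^{2} d\mu(s) \;\geq\; 0,
\end{align*}
where interchanging the finite sum with the integral is immediate. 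Equivalently, one may exhibit the explicit feature map $\Psi \colon \mathbb R \to L^2(\mathbb R, \mu)$, $\Psi(x) := \bigl(s \mapsto e^{\iota x s}\bigr)$, for which $\kappa(x,y) = \langle \Psi(x), \Psi(y)\rangle_{L^2(\mu)}$; this makes the kernel property manifest and simultaneously yields a reproducing-kernel Hilbert space.

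I do not expect a genuine obstacle here: the only points needing care are stating the convention that a $\mathbb C$-valued kernel means a Hermitian positive semi-definite function, and noting that every interchange above is over a finite index set so no dominated-convergence argument is required. The reason this lemma is placed first is that the displayed inequality, specialized by taking $\lambda_1,\dots,\lambda_{2^d}$ in place of $x_1,\dots,x_m$, is exactly what yields $\matr A_{\phi,\lambda} \succeq 0$ in Theorem~\ref{Th:Parallel_Encoding}\ref{subth:trace_distance}, which in turn underlies the Kraus decomposition used throughout.
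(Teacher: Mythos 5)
Your proposal is correct and follows essentially the same route as the paper's proof: both expand $\kappa(x_i,x_j)=\int e^{\iota(x_i-x_j)s}\,d\mu(s)$, pull the finite quadratic form inside the integral, and recognize the integrand as $\bigl|\sum_i c_i e^{\iota x_i s}\bigr|^2\ge 0$ (the paper phrases this via the vector $\vect a(z)$ with entries $e^{\iota x_i z}$, and your $L^2(\mu)$ feature map is just this same observation stated structurally). The only cosmetic difference is that you correctly call the symmetry Hermitian/conjugate symmetry where the paper loosely says ``anti-symmetric.''
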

     \begin{proof}
         First note that $\kappa(x,y):= \int \exp(\iota(x-y)z) d\mu(z)$ is anti-symmetric in the arguments, thus the resulting Gram matrix will be hermitian, therefore we just need to show that for any Gram matrix $\matr A \in \mathbb C^{n \times n}; A_{ij}: = \kappa(x_i, x_j)$, $\forall \vect u \in \mathbb C^n$, $\vect u^H \matr A \vect u \ge 0$. Define $\vect a(z): \mathbb R \mapsto \mathbb C ^ n$ as $a_i(z): = \exp(\iota x_i z)$, then $\vect u^H \matr A \vect u = \int \vect u^H  \vect a(z) \vect a(z)^H \vect u d \mu(z) = \int \lvert \vect a(z)^H \vect u \rvert ^2 d \mu(z) \geq 0$.
     \end{proof}
\begin{remark}
         If the underlying distribution is a zero-centered normal distribution with variance $\frac{1}{\sigma}$, the resulting kernel is an RBF kernel defined as $\kappa_{RBF} (x,y):= \exp(\frac{-\lvert x -y \rvert ^2}{2\sigma^2})$. 
          \label{Re:RBF}
\end{remark}
\begin{proof}[\textbf{Theorem \ref{Th:Parallel_Encoding}: Proof}]
    \begin{enumerate}[label=(\alph*)]
    \label{proof:Parallel}
        \item This part has two statements, we have already discussed the latter while the former follows from (b), so essentially if we prove (b), (a) follows. Nevertheless, (a) can also be shown by explicitly showing the map is completely positive and trace-preserving. A crucial step in showing that is the following observation about the given point-wise multiplication, for any given $\rho$, $\mathcal E_\phi(\rho) = \mathbb E_{z\sim \phi}[U(z)\rho U(z)^H]$. This representation directly allows us to check for the fact that $\mathcal E _\phi$ is trace-preserving and positive. To show complete positivity introduce a system \textbf{R} of arbitrary dimensions to our system \textbf{Q}. We need to show that $\mathbb I_{\textbf R} \otimes \mathcal E _\phi$ is positive. Any positive operator $\rho_{\textbf{RQ}}$ on the system \textbf{RQ} can be decomposed as $\sum_k \sigma_k \ket{\psi^{(k)}_{\textbf{RQ}}} \bra{\psi^{(k)}_{\textbf{RQ}}}$ with $\sigma_k \ge 0$. Therefore showing $\mathbb I _ \textbf R \otimes \mathcal E _\phi(\ket{\psi_{\textbf{RQ}}} \bra{\psi_{\textbf{RQ}}}) \succeq 0 $ for any pure quantum state $\ket{\psi_{\textbf{RQ}}}$ allows us to establish the positivity of $\mathbb I_{\textbf R} \otimes \mathcal E _\phi$. To show that use the Schmidt decomposition of $\ket{\psi_{\textbf{RQ}}}$ as $\ket{\psi_{\textbf{RQ}}} = \sum_i \alpha_i \ket{i_{\textbf{R}}} \ket{i_{\textbf{Q}}}$, then $\forall \ket{u_\textbf{RQ}}$,
        \begin{equation}
        \label{eq:Complete_positivity}
        \begin{split}   
        &\bra{u_\textbf{RQ}}(\mathbb I _ \textbf R \otimes \mathcal E _\phi(\ket{\psi_{\textbf{RQ}}} \bra{\psi_{\textbf{RQ}}})) \ket{u_\textbf{RQ}} \\
        &= \bra{u_\textbf{RQ}}(\Sigma_{ij} \alpha_i \alpha_j \ket{i_\textbf R}\bra{j_\textbf R} \otimes \mathbb E_{z\sim \phi}[U(z)\ket{i_\textbf Q}\bra{j_\textbf Q}\\
        & \quad \quad U(z)^H]) \ket{u_\textbf{RQ}} \\
        &= \mathbb E_{z\sim \phi}[\bra{u_\textbf{RQ}}\mathbb I _ \textbf R \otimes U(z)(\Sigma_{ij} \alpha_i \alpha_j \ket{i_\textbf R}\bra{j_\textbf R} \otimes \ket{i_\textbf Q}\bra{j_\textbf Q})\\
        & \quad \quad \mathbb I _ \textbf R \otimes U(z)^H \ket{u_\textbf{RQ}}] \\
        &= \mathbb E_{z\sim \phi}[\bra{u_\textbf{RQ}}\mathbb I _ \textbf R \otimes U(z)\ket{\psi_\textbf{RQ}} \bra{\psi_\textbf{RQ}}\mathbb I _ \textbf R \otimes U(z)^H \ket{u_\textbf{RQ}}] \\
        &= \mathbb E_{z\sim \phi}[\lvert \bra{\psi_\textbf{RQ}}\mathbb I _ \textbf R \otimes U(z)^H \ket{u_\textbf{RQ}} \rvert ^ 2] \ge 0\\
        \end{split}
        \end{equation}        
        \item Positivity follows from Lemma \ref{lemma:phikernel}, while the second part can be shown as follows:
        \begin{equation}
        \label{eq:kraus}
        \begin{split}   
        \sum_k E_k \rho(x) E_k^H &= \sum_k \sigma_k \rho(x) \odot u_k u_k^H \\
        &= \rho(x) \odot \sum_k \sigma_k u_k u_k^H \\
        &= \rho(x) \odot A_{\phi, \lambda} 
        \end{split}
        \end{equation}
        
        Equation \ref{eq:kraus} also allows us to show that $\sum_k E_k^H E_k = \sum_k E_k E_k^H = \sum_k E_k \mathbb{I} E_k^H  = \mathbb{I} \odot A = \mathbb{I}$. Thus $\{E_k\}_k$ gives the Kraus representation of the desired transformation. Since the Kraus representation exists the concerned operator is CPTP thus implying (a). Note that the above representation only holds for $\sigma_k \ge 0$, therefore $A_{\phi, \lambda} \succeq 0$ was crucial in verifying the Kraus representation.%
        \item To find the upper bound of the trace distance we maximize over all the $\rho \in \mathcal{M}_d := \{\rho |\rho : \mathbb R \mapsto \mathcal S _+^{d(1)}\}$ and to solve the optimization problem further we write the trace distance as a semi-definite program.
        \begin{align*}
            &d(\mathcal E_\phi(\rho(x)), \mathcal E_\phi(\rho(y))) \\
            &\leq \max_{\sigma \in \mathcal M_d} d(\mathbb E _{\delta \sim \phi} [\sigma(x + \delta)], \mathbb E _{\delta \sim  \phi} [\sigma(y + \delta)])\\
            &\leq \max_{0\preceq P \preceq \mathbb I_{2^d} } \max_{\sigma \in \mathcal M_d}\\& \quad\quad\quad Tr(P(\mathbb E _{\delta \sim  \phi} [\sigma(x + \delta)] - \mathbb E _{\delta \sim  \phi} [\sigma(y + \delta)])) \\
            &= \max_{\substack{0\preceq P \preceq \mathbb I_{2^d} \\ \sigma \in \mathcal M_d} } \int_{\mathbb R} Tr(P\sigma(z)) (f_\phi(z-x) - f_\phi(z-y)) d\lambda(z) \\
            &\leq \max_{\substack{f \in \mathcal{F}_{[0,1]} \\:= \{f | f:\mathbb R \mapsto [0,1]\}}} \int_{\mathbb R} f(z) (f_\phi(z-x) - f_\phi(z-y)) d\lambda(z)\\
            &= \int_{\mathbb R} \max(f_\phi(z-x) - f_\phi(z-y), 0) d\lambda(z)
        \end{align*}
        The last inequality follows from the fact that $\{Tr(P\sigma)| 0\preceq P \preceq \mathbb I_{2^d} \land \sigma \in \mathcal M_d\} \subseteq \mathcal{F}_{[0,1]}$ and therefore the maximum is obtained by $f(z)= 1$ if $\mu_x(z) \geq \mu_y(z)$ and $0$ otherwise
        Additionally, note that $\int_{\mathbb R} \max(f_\phi(z-x) - f_\phi(z-y), 0) d\lambda(z) = \int_{\mathbb R} \max(f_\phi(z-y) - f_\phi(z-x), 0) d\lambda(z)$.%

    \end{enumerate}
\end{proof}

\begin{proof}[\textbf{Corollary \ref{cor:Parallel_Gaussian}: Proof}]
\begin{enumerate}[label=(\alph*)]
    \item Let $\mu_x$ and $\mu_y$ be the PDF of the Gaussian distributions centered at $x$ and $y$ respectively and define $C:= \{r\in \mathbb R | \mu_x(z) - \mu_y \geq 0\} = \{r\in \mathbb R | (y-x)(z) \leq \frac{y^2 - x^2}{2}\}$
    \begin{align*}
        &\int_{\mathbb R} max(\mu_x(z) - \mu_y(z), 0) d\lambda(z)= \int_{C} \mu_x(z) - \mu_y(z) d\lambda(z) \\
        &= \int_{C} \mu_x(z) dz - \int_{C} \mu_y(z) d\lambda(z) \\
        &= \mathbb P[z \sim \mathcal{N}(x(y-x), (|y-x|\sigma)^2) \leq \frac{y^2 - x^2}{2}] - \\
        &  \quad \mathbb P[z \sim \mathcal{N}(y(y-x), (|y-x|\sigma)^2) \leq \frac{y^2 - x^2}{2}] \\
        &= \Phi(\frac{|y-x|}{2\sigma}) - \Phi(-\frac{|y-x|}{2\sigma}) 
        = 2(\Phi(\frac{|y-x|}{2\sigma}) - \frac{1}{2})
    \end{align*}
\item Due to the linearity of the quantum operations regarding the input density matrices we can write $f(x)$ as: 
\begin{align*}
    f(x) &= Tr(P\mathcal E (\mathbb E_\delta[\rho(x + \delta)])) \\
&= \mathbb E_\delta [Tr(P\mathcal E (\rho(x + \delta)))] \\
&= \mathbb E_\delta[g(x + \delta)]
\end{align*}
where $g(x)$ is the quantum classifier without the smoothing operation, i.e., $g(x) := Tr(P\mathcal E (\rho(x)))$. Therefore adding the smoothing circuit is equivalent to smoothing the classier $g(x)$ using Randomized smoothing, hence using the robustness guarantee from \cite{cohen2019} shows the claim made in (b).
\end{enumerate}
\end{proof}
\begin{proof}[\textbf{Theorem \ref{Th:Sequential_Encoding}: Proof}]
        \begin{enumerate}[label=(\alph*)]
    \item In order to prove the first part we will use mathematical induction. We start with defining the induction hypothesis $S(l):= \matr{\tilde \rho}_{l, \phi}(x) = \mathbb E_{\svect {\delta^{(l)}}}[\sigma_l(\vect {x^{(l)}}+ \svect {\delta^{(l)}})]$, i.e. the claim holds for sequential encoding with $l$ parallel layers. Now the base case follows from the results for the parallel encoding scheme, i.e. Theorem \ref{Th:Parallel_Encoding}. To prove the induction step assume that the result holds for $l$ layers, i.e. $S(l)$ holds, then using Fubini's theorem(for the second last equality) we have the following, 
    \begin{align*}
        &\tilde\rho_{l+1, \phi}(x) = \mathcal E^{l+1}_\phi(U_{l+1}(x) W^{(l+1)} \matr{\tilde \rho}_{l, \phi}(x)U_{l+1}(x)^H W^{(l+1)H})\\
        &= \mathbb E_{\delta_{l+1}}[U_{l+1}(x + \delta_{l+1}) W^{(l+1)} \matr{\tilde \rho}_{l, \phi}(x)\\
        & \quad \quad U_{l+1}(x + \delta_{l+1})^H W^{(l+1)H}] \\
        &= \mathbb E_{\delta_{l+1}}[U_{l+1}(x + \delta_{l+1}) W^{(l+1)} \mathbb E_{\svect {\delta^{(l)}}}[\sigma_l(\vect{x^{(l)}}+ \bm {\delta^{(l)}})] \\
        &\quad \quad U_{l+1}(x + \bm_{l+1})^H W^{(l+1)H}] \quad \text{(Hypothesis)}\\
        &= \mathbb E_{\delta_{l+1}, \svect {\delta^{(l)}}}[U_{l+1}(x + \delta_{l+1}) W^{(l+1)} \sigma_l(\vect{x^{(l)}}+ \svect {\delta^{(l)}}) \\
        & \quad \quad U_{l+1}(x + \delta_{l+1})^H W^{(l+1)H}] \quad \text{(Fubini)}\\
        &= \mathbb E_{\svect {\delta^{(l+1)}}}[\sigma_{l+1}(\vect{x^{(l+1)}}+ \svect {\delta^{(l+1)}})] \\
    \end{align*}

    Therefore, $S(l)$ holds for all $l \in \mathbb N$ and hence the claim in the statement is established.

    \label{sub_proof:sequential_smoothing}

    \item Observe that using the results from \ref{sub_proof:sequential_smoothing} with an additional assumption of the existence of probability density function $f_\phi$ one can write the following, 
    \begin{align*}
        \matr{\tilde \rho}_{L,\phi}(x) = \int_{\mathbb R ^ L} \sigma_L(\vect z) \prod_{i=1}^{L} f_\phi(z_i - x) d\lambda(\vect z)
    \end{align*}
    Now, use the arguments in the proof of Theorem \ref{Th:Parallel_Encoding}\ref{subth:trace_distance} to show the claim.
    \end{enumerate}
\end{proof}

\begin{proof}[\textbf{Theorem \ref{th:singlesmooth}: Proof:}]
    $  $\newline
    Using the arguments from The Proof of Theorem \ref{Th:Parallel_Encoding} (a), we just need to show the statement for $\rho  = \rho_1 \otimes \rho_2$, which further translates to showing that:
    \begin{equation}
       \mathcal N_{PD}(RZ(\vect x) \text{ } \matr \rho \text{ } RZ(\vect x)^H) = \mathbb{E}_{\matr \delta \sim \phi}[RZ(\vect x + \matr \delta) \text{ } \matr \rho \text{ } RZ(\vect x + \matr \delta)^H ] 
    \end{equation} 

    To show that observe, 
        \begin{equation}
        \begin{aligned}
        &\mathbb{E}_{\matr \delta \sim \phi}[RZ(\vect x + \matr \delta) \text{ } \matr \rho \text{ } RZ(\vect x + \matr \delta)^H ]\\
        &= \mathbb{E}_{\matr \delta \sim \phi}[\begin{bmatrix}
        \rho_{11} & \rho_{12} \exp(-\iota(\vect x + \matr \delta))\\
        \rho^*_{12} \exp(\iota(\vect x + \matr \delta)) & \rho_{22}
        \end{bmatrix}] \\
        &= \begin{bmatrix}
        \rho_{11} & \rho_{12} \exp(-\iota\vect x) \phi(1)\\
        \rho^*_{12} \exp(\iota\vect x)\phi(1)^* & \rho_{22}
        \end{bmatrix} \\
        &= \mathcal N_{PD}(RZ(\vect x) \text{ } \matr \rho \text{ } RZ(\vect x)^H)
        \end{aligned}
    \end{equation}
\end{proof}

\printbibliography

\end{document}